\def\com#1{\textcolor{Black}{#1}}
\def\comone#1{\textcolor{Black}{#1}}
\def\comtwo#1{\textcolor{Black}{#1}}
\def\comthree#1{\textcolor{Black}{#1}}
\def\comnew#1{\textcolor{Black}{#1}}
\newtheorem{theorem}{Theorem}
\newtheorem{proposition}[theorem]{Proposition}
\newtheorem{corollary}[theorem]{Corollary}
\theoremstyle{definition}
\newtheorem{defn}{Definition}
\theoremstyle{remark}
\theoremstyle{remark}
\newlist{Properties}{enumerate}{2}
\setlist[Properties]{label=\textit{Property} \textit{\arabic*.},itemindent=*}
\begin{document}
\title{Directional and Causal Information Flow in EEG for Assessing Perceived Audio Quality}

\author{\IEEEauthorblockN{Ketan Mehta\IEEEauthorrefmark{1} and J\"{o}rg Kliewer\IEEEauthorrefmark{2}, \IEEEmembership{Senior Member,~IEEE}\thanks{This work was supported in part by NSF grant CCF-1065603, and presented in part at the 49th Asilomar Conference on Signals, Systems and Computers \cite{MehtaAsilomar2015}, and at the 2017 IEEE International Conference on Communications \cite{mehta2017icc}.}}\\
\IEEEauthorblockA{\IEEEauthorrefmark{1}Klipsch School of Electrical and Computer Engineering, \\ New Mexico State University, NM 88003}
\\
\IEEEauthorblockA{\IEEEauthorrefmark{2}Helen and John C.~Hartmann Dept. of Electrical and Computer Engineering,\\ New Jersey Institute of Technology, NJ 07103}
}%
\maketitle

\begin{abstract}
In this paper, electroencephalography (EEG) measurements are used to infer change in cortical functional connectivity in response to change in audio stimulus. Experiments are conducted wherein the EEG activity of human subjects is recorded as they listen to audio sequences whose quality varies with time. A causal information theoretic framework is then proposed to measure the information flow between EEG sensors appropriately grouped into different regions of interest (ROI) over the cortex. A new causal bidirectional information (CBI) measure is defined as an improvement over standard directed information measures for the purposes of identifying connectivity between ROIs in a generalized cortical network setting. CBI can be intuitively interpreted as a causal bidirectional modification of directed information, and inherently calculates the divergence of the observed data from a multiple access channel with feedback. Further, we determine the analytical relationship between the different causal measures and compare how well they are able to distinguish between the perceived audio quality. The connectivity results inferred indicate a significant change in the rate of information flow between ROIs as the subjects listen to different audio qualities, with CBI being the best in discriminating between the perceived audio quality, compared to using standard directed information measures.
\end{abstract}

\begin{IEEEkeywords}

Electroencephalography (EEG), directed information, causal conditioning, functional connectivity, audio quality
\end{IEEEkeywords}

\vspace{-2ex}
\section{Introduction}
\label{sec:intro}

\vspace{-1ex}

%

Detection and response to stimuli is in general a multistage process that results in the hierarchical activation and interaction of several different regions in the brain. To understand the dynamics of brain functioning it therefore essential to investigate the information flow and connectivity (interactions) between different regions in the brain. Further, in each of these hierarchies, sensory and motor information in the brain is represented and manipulated in the form of neural activity patterns. The superposition of this electrophysiological activity can be recorded via electrodes on the scalp and is termed as electroencephalography (EEG).

Functional connectivity refers to the statistical dependencies between neural data recorded from spatially distinct regions in the brain \cite{friston1994functional, horwitz2003elusive}. 
Information theory provides a stochastic framework which is fundamentally well suited for the task of assessing functional connectivity \cite{dimitrov2011information, spikes} between neural responses. For example, \cite{Paninski, Panzeri} presented \comtwo{mutual information (MI) \cite{cover2012elements}} estimates to assess the correlation between the spike timing of an ensemble of neurons. Likewise, \cite{ohiorhenuan2010sparse, schneidman2006weak, shlens2009structure} investigated the effectiveness of calculating pairwise maximum entropy to model the activity of a larger population of neurons.
MI has also has been successfully employed in the past for determining the functional connectivity in EEG sensors for feature extraction and classification purposes \cite{Xu, wu2007classifying}.
Similarly, other studies have used MI to analyze EEG data to investigate corticocortical information transmission for pathological conditions such as Alzheimer's disease \cite{Alzh} and schizophrenia
\cite{Schiz}, or for odor stimulation \cite{Odor1,Odor2}.

One limitation of MI and entropy when applied in the traditional (Shannon) sense is their inability to distinguish between the direction of information flow, as pointed out by Marko in \cite{marko1973bidirectional}. In the same work Marko also proposed to calculate the information flow in each direction of a bidirectional channel using conditional probabilities based on Markovian dependencies. In \cite{massey1990causality}, Massey extended the initial work by Marko and formally defined directed information as the information flow from the input to the output of a channel with feedback. Other measures have similarly been defined for calculating the directional information transfer rate between random processes, most notably, Kamitake's directed information \cite{Kamitake} and transfer entropy by Schreiber \cite{schreiber2000measuring}. Further, feedback and directionality are also closely related to the notion of \textit{causality} in information measures\footnote{Causality and directionality are formally defined in Def.~1 and Def.~2, resp., in Sec. III.} \cite{amblard2011directed, kramerthesis, massey1990causality}. Massey's directed information and transfer entropy are in general referred to as \textit{causal}, since they measure statistical dependencies between the \textit{past} and current values of a process. We adopt this definition in this paper and therefore, causality here takes on the usual meaning of a cause occurring prior to its effect, or a stimulus occurring before response, i.e., how the past states of a system influences its present and future states \cite{quinn2015directed}.
Our interest here is in using EEG for assessing human perception of time-varying audio quality. We are inspired by our recent results in \cite{mehtaMBMC} which uses MI to quantify the information flow over the end-to-end perceptual processing chain from audio stimulus to EEG output. One characteristic common in subjective audio testing protocols including the current state-of-the-art approach, Multi Stimulus with Hidden Anchor (MUSHRA) \cite{MUSHRA}, is that they require human participants to assign a single quality-rating score to each test sequence. \comtwo{Such conventional testing suffers from a subject-based bias towards cultural factors in the local testing environment and can tend to be highly variable. Instead, neurophysiological measurements such as EEG directly capture and analyze the brainwave response patterns that depend only on the perceived variation in signal quality \cite{bosse2016brain, engelke2017psychophysiology}. As a result, EEG is inherently well suited to assess human perception of audio \cite{creusere2012assessment, porbadnigk2010using} and visual \cite{porbadnigk2010using, scholler2012toward, mustafa2012single} quality. For example, in \cite{porbadnigk2010using, scholler2012toward} the authors used linear discriminant analysis classifiers to extract features from EEG for classifying noise detection in audio signals and to assess changes in perceptual video quality, respectively. Similarly, \cite{creusere2012assessment} identified features in EEG brainwave responses corresponding to time-varying audio quality using a time-space-frequency analysis, while \cite{mustafa2012single} employed a wavelet-based approach for an EEG classification of commonly occurring artifacts in compressed video, using a single-trial EEG.}




To the best of our knowledge, however, the work presented here is the first time that functional connectivity has been applied in conjunction with EEG measurements for the purposes of assessing audio quality perception. By using causal information measures to \comtwo{detect a change in functional connectivity} we directly identify those cortical regions which are most actively involved in perceiving a \comtwo{change in audio quality}. Further, we establish the analytical relationship between the different presented information measures and compare how well each of them is able to distinguish between the perceived audio qualities. Towards this end, we consider two distinct scenarios for estimating the connectivity between EEG sensors by appropriately grouping them into regions of interest (ROIs) over the cortex. In the first scenario, we employ Massey's and Kamitake's directed information and transfer entropy, respectively, to calculate the pairwise directional information flow between ROIs while using causal conditioning to account for the influence from all other regions. In the second scenario we propose a novel information measure which can be considered as a causal bidirectional modification of directed information applied to a generalized cortical network setting. In particular, we show that the proposed causal bidirectional information (CBI) measure assesses the direct connectivity between any two given nodes of a multiterminal cortical network by inherently calculating the divergence of the induced conditional distributions from those associated with a multiple access channel (MAC) with feedback. Each presented measure is validated by applying it to analyze real EEG data recorded for human subjects as they listen to audio sequences whose quality changes over time. For the sake of simplicity and analytical tractability we restrict ourselves to only two levels of audio quality (high quality and degraded quality). We determine and compare the instantaneous information transfer rates as inferred by each of these measures for the case where the subject listens to high quality audio as opposed to the case when the subject listens to degraded quality audio. Finally, note that we are not able make \comone{any assumptions about the actual structure of the underlying cortical channels (e.g., linear vs. non-linear) as our analysis is solely based on the observed empirical distributions of the data at the input and output of these channels.} 

The rest of the paper is organized as follows. Section II provides an overview of EEG, the experiment, and the stimulus audio sequences. In Section III we review some directional information measures widely used in the literature for estimating connectivity. We assess the information flow between cortical regions using directional information measures in Section IV, along with determining the analytical relationship between these measures. In Section V we introduce CBI and discuss its properties. The results of our analysis on EEG data are presented in Section VI. We finally conclude with a summary of our study and future directions in Section VII.

\vspace{-2ex}
\section{Background}
\label{sec:methods}

In the conducted study, the EEG response activity of human test subjects is recorded as they listen to a variety of audio test-sequences. The quality of these stimulus test-sequences is varied with time between different ``quality levels". 
All audio test-sequences were created from three fundamentally different base-sequences sampled at a reference base quality of 44.1\,kHz, with a precision of 16 bits per sample. Here, we employ the same test-sequences and distortion quality levels as in \cite{creusere2011assessing}.\
Two different types of distortions are considered for our analysis, scalar quantization and frequency band truncation, \comtwo{where the specific parameters are listed in Table 1.} The test-sequence for a specific trial is created by selecting one of the two distortion types and then applying it to the original base-sequence in a time-varying pattern of non-overlapping five second blocks as shown in Fig.~\ref{fig:aq}. Multiple of such trials are conducted for each subject by choosing all possible combinations of sequences, distortion types, and time-varying patterns. Note that despite the subjects being presented with all different quality levels in our listening tests, here we focus exemplary only on the ``high" base-quality and the ``Q3 degraded" quality audio. This addresses the worst-case quality change and keeps the problem analytically and numerically tractable. A detailed exposition of the experimental setup, test-sequences, and distortion quality levels is provided in \cite{mehtaMBMC}.

\renewcommand{\arraystretch}{1.2}
\begin{table}[bp]
	\vspace{-1ex}
	\centering
	\small
	\caption{\small{\com{Different quality levels presented to the subject during
		the course of an audio trial. To generate the distortion, each of
		these base-sequences were passed through a 2048-point modified discrete cosine transform and either frequency truncation or scalar quantization was applied to the coefficients prior to reconstruction \cite{creusere2011assessing}.}}}
	\begin{tabular}{@{}r|c|c@{}}
		\toprule
		Quality Level & Freq. Truncation & Scalar Quantization \\
		& Low Pass Filter & No.~of Significant Bits Retained \\
		\midrule
		Q1 	& 	4.4 KHz 	&	4	 \\
		Q2 	& 	2.2 KHz 	&	3	 \\
		Q3 	& 	1.1 KHz		&	2	 \\ 
		\bottomrule
	\end{tabular}
	\label{tab:audioquality}
\vspace{-1ex}
\end{table}
\begin{figure}[htbp]
	\vspace{-1ex}
	\centering
		\includegraphics[width = \columnwidth]{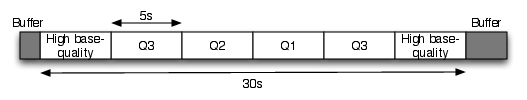}
	\caption{\small{A 30 second test sequence where the audio quality changes in a time-varying pattern over the whole duration of the sequence. Different possible combinations of quality changes and two distortion types are presented to each subject in a randomized fashion (adopted from \protect\cite{mehtaMBMC}).}}
	\label{fig:aq}
	\vspace{-2ex}
\end{figure}



%


The EEG data is captured on a total of 128 spatial channels using an ActiveTwo Biosemi system with a sampling rate of 256\,Hz. To better manage the large amount of collected data while also effectively covering the activity over different regions of the cortex, we group the 128 electrodes into specific regions of interest (ROI) as shown in Fig.~\ref{fig:ROI}. While a large number of potential grouping schemes are possible, this scheme is favored for our purposes as it efficiently covers all the cortical regions (lobes) of the brain with a relatively low number of ROIs. Also, the number of electrodes in any given ROI varies between a minimum of 9 to a maximum of 12. For example, in our region partitioning scheme ROI~2 (9 electrodes) covers the prefrontal cortex, ROI~6 (10 electrodes) the parietal lobe, ROI~8 (9 electrodes) the occipital lobe, and ROI~5 and ROI~7 (12 electrodes each) cover the left and right temporal lobes, respectively. In essence, our goal here then is to investigate the causal connectivity between the different ROI in response to the different audio quality levels.

\begin{figure}[btp]
	\vspace{-1ex}
	\centering
	\includegraphics[scale=0.22]{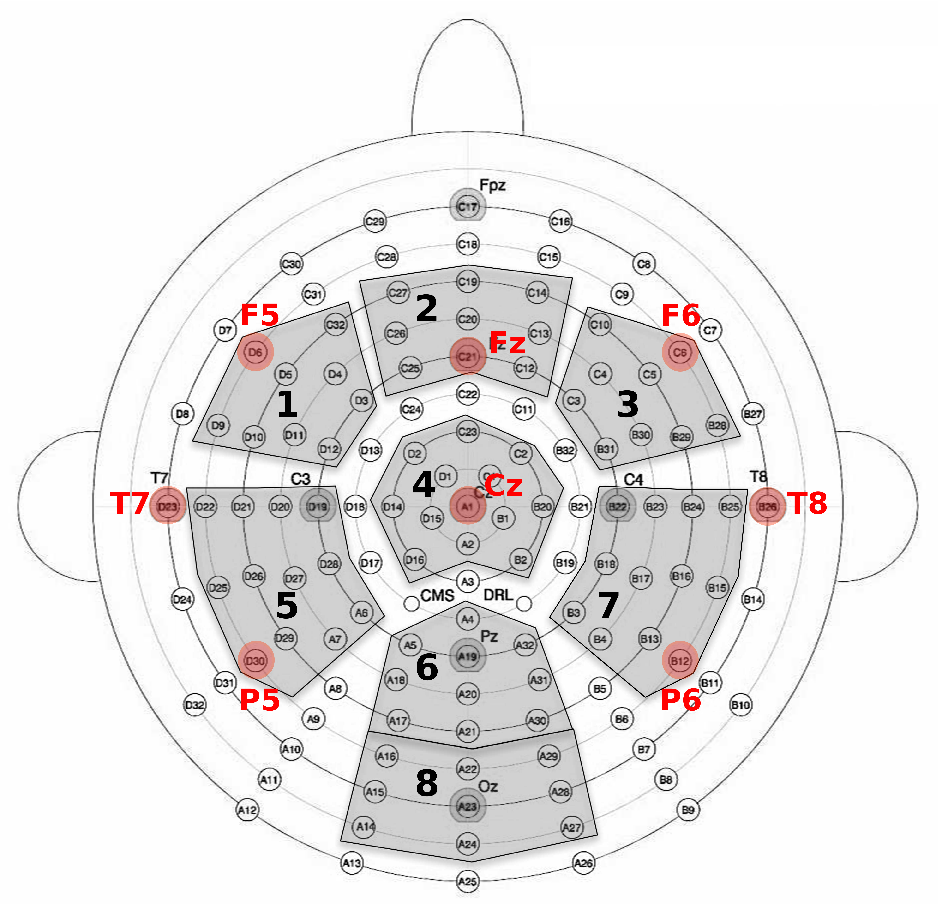}
	\caption{\small{The 128 electrodes are grouped into eight regions of
			interest (ROI) to effectively cover the different cortical regions
			(lobes) of the brain (adopted from \protect\cite{mehtaMBMC}). The red electrode locations are used for generating synthetic EEG data in Sec.~VI-C, see Table~\ref{tab:EEGsimulated} for the naming convention.}}
	\label{fig:ROI}
	\vspace{-4ex}
\end{figure}

\vspace{-2ex}
\section{Directionality, Causality and Feedback in Information Measures}
\vspace{-1ex}
Let $X^n$ denote a random vector of $n$ constituent discrete valued random variables $[X_1, X_2, \ldots, X_n]$. Also, let $x^n = [x_1, x_2, \ldots, x_n]$ be the corresponding realizations drawn from the joint probability distribution denoted by $p(x_1, x_2, \ldots , x_n)$, respectively. Similarly, $X_n^N = [X_n , X_{n+1} , \cdots, X_N]$ represents a length $N-n+1$ random vector. 
Denoting the expected value of a random variable by $\mathbb{E}[\cdot]$, the entropy of the $n$-tuple $X^n$ can be written as $H(X^n) = -\mathbb{E}[\log p(X^n)].$

The mutual information (MI) between two length $N$ interacting random processes $X^N$ and $Y^N$ is defined as 
\begin{align}
I(X^N;Y^N) 	&= \sum\limits_{n=1}^{N} I(X^N;Y_n|Y^{n-1}) \label{eq:chain} \\
			&= H(Y^N) - H(Y^N|X^N),
\end{align}
with the conditional entropy $H(Y^n|X^n)\!=\!-\mathbb{E}[\log p(Y^n|X^n)]$ and
\begin{align}
H(Y^N|X^N) =  \sum\limits_{n=1}^{N} H(Y_n|Y^{n-1}X^{N}). \label{eq:conditionalentropy}
\end{align}
The MI measures the reduction in the uncertainty of $Y^N$ due to the knowledge of $X^N$, and is zero if and only if the two processes are statistically independent.
The MI between the two random processes is symmetric, i.e., $I(X^N;Y^N)=I(Y^N;X^N)$, and can therefore not distinguish between the direction of the information flow. Alternatively, a directional information measure introduces the notion of direction in the exchange of information between sources. In this paper we define a directional information measure as follows.


\begin{defn}
\textit{A directional information measure from $X^N$ to $Y^N$, represented with an arrow $X^N \rightarrow Y^N$, quantifies the information exchange rate in the direction from the input process $X^N$ towards the output process $Y^N$.}
\end{defn}

As implied by the definition, the information flow measured by a directional information measure is not symmetric, and in general the flow from $X^N \rightarrow Y^N$ is not equal to $Y^N\rightarrow X^N$.  In the following, we will examine three different directional information measures presented in the literature.

\vspace{-2ex}
\subsection{Massey's directed information}
 \vspace{-1ex}
Directed information as proposed by Massey in \cite{massey1990causality} is an extension of the preliminary work by Marko \cite{marko1973bidirectional} to characterize the information flow on a communication channel with feedback. Given input $X^N$ and output $Y^N$, the channel is said to be used without feedback if 
\begin{align}
p(x_n|y^{n-1}x^{n-1})=p(x_n|x^{n-1}), \quad\, \forall \,\,\, n \leq N, \label{eq:feedback}
\end{align}
i.e., the current channel input value does not depend on the past output samples. If the channel is used with feedback then \cite{massey1990causality} shows that the chain rule of conditional probability can be reduced to
\begin{align}
p(y^N|x^N) = \prod\limits_{n=1}^N p(y_n|y^{n-1}x^n). \label{eq:feedback2}
\end{align}
In closely related work \cite{Kramer98} introduced the concept of causal conditioning based on \eqref{eq:feedback2}. The entropy of $Y^N$ \textit{causally} conditioned on $X^N$ is defined as
\begin{align}
H(Y^N||X^N) = \sum\limits_{n=1}^{N} H(Y_n|Y^{n-1}X^n).
\end{align}
\begin{defn}
\textit{A measure between two random processes $X^N$ and $Y^N$ is said to be causal if the information transfer rate at time $n$ relies only on the dependencies between their past and current sample values $X^n$ and $Y^n$, and is not a function of the future sample values $X^N_{n+1}$ and $Y^N_{n+1}$.}
\end{defn}
\comnew{Therefore, the notion of causality as used in this work is based on inferring the statistical dependencies between the past states of a system on its present and future states \cite{amblard2011directed, kramerthesis, massey1990causality, quinn2015directed}. This is in contrast to the stronger interventional interpretation about casual inferences such as in \cite{pearl2003causality} which draws conclusions about \textit{causation}, e.g., process $X^N$ causes $Y^N$.}

Massey's directed information is a causal measure between sequence $X^N$ and $Y^N$ defined as
\begin{align} 
DI_1(X^N\!\rightarrow\! Y^N) &\triangleq H(Y^N)-H(Y^N||X^N) \label{eq:DI1a}\\
				&= \sum\limits_{n=1}^{N} I(X^n;Y_n|Y^{n-1}) \label{eq:DI1b}.
\end{align}
Equivalently, the directed information can also be written in terms of the Kullback-Leibler (KL) divergence as
\begin{align}
DI_1(X^N\!\rightarrow\! Y^N) &= \sum\limits_{n=1}^{N} D_{KL}\Big( p(y_n|y^{n-1}x^{n})\,||\,p(y_n|y^{n-1}) \Big) \label{eq:masseyKL}\\
&= \sum\limits_{n=1}^{N} \mathbb{E}\left[ \log \frac{p(Y_n|Y^{n-1}X^{n})}{p(Y_n|Y^{n-1})} \right].
\end{align}
Also in general, 
\begin{align}
DI_1(X^N\!\rightarrow\! Y^N) \leq I(X^N;Y^N), 
\end{align}
with equality if the channel is used without feedback. Directed information therefore not only gives a meaningful notation to the directivity of information, but also provides a tighter characterization than MI on the total information flow over a channel with feedback.
 \vspace{-1ex}
\subsection{Transfer entropy} 
\vspace{-1ex}
In \cite{schreiber2000measuring}, Schreiber introduced a causal measure for the directed exchange of information between two random processes called transfer entropy. 
Similar to the pioneering work in \cite{marko1973bidirectional}, transfer entropy considers a bidirectional communication channel between $X^N$ and $Y^N$, and measures the deviation of the observed distribution $p(y_n|y^{n-1}x^{n-1})$ over this channel from the Markov assumption
\begin{align}
p(y_n|y^{n-1}x^{n-1})=p(y_n|y^{n-1}), \qquad n \leq N. \label{eq:TEfeedback}
\end{align}
In particular, transfer entropy quantifies the deviation of the l.h.s. of \eqref{eq:TEfeedback} from the r.h.s. of \eqref{eq:TEfeedback} using the KL divergence and is defined as
\begin{align}  
TE(X^{n-1} \!\rightarrow\! Y^n ) &\triangleq D_{KL}\Big( p(y_n|y^{n-1}x^{n-1})\,||\,p(y_n|y^{n-1}) \Big) \label{eq:TE1}\\
&= \mathbb{E}\left[ \log \frac{p(Y_n|Y^{n-1}X^{n-1})}{p(Y_n|Y^{n-1})} \right] \label{eq:TE2} \\
&= H(Y_n|Y^{n-1})- H(Y_n|Y^{n-1}X^{n-1}) \label{eq:TE3}\\
&= I(Y_n;X^{n-1}|Y^{n-1}) \label{eq:TE4}.
\end{align}

\comnew{Correspondingly, for random processes of block length $N$ we can then define a sum transfer entropy \cite{ito2016backward, wibral2014directed} which in effect calculates and adds the transfer entropy at every history depth $n$,
\begin{align}
{TE}^\ast(X^{N-1} \!\rightarrow\! Y^N ) = \sum\limits_{n=1}^{N} TE(X^{n-1} \rightarrow Y^n).
\end{align}
}

Another widely used and closely related measure for causal influence was developed by Granger \cite{granger1969investigating}. Granger causality is a directional measure of statistical dependency based on prediction via vector auto-regression. The relationship between Granger causality and directional information measures has been previously analyzed in \cite{barnett2009granger, quinn2011estimating}.
For the specific case of Gaussian random variables, as it is the case in our EEG scenario, transfer entropy has been shown to be equivalent to Granger causality.

\vspace{-2ex}
\subsection{Kamitake's directed information}
Another variant of directed information is defined by Kamitake in \cite{Kamitake} and
given as 
\begin{align} \label{eq:DI2} 
DI_2(X^N\rightarrow Y^N) &\triangleq \sum\limits_{n=1}^{N} I(X_n;Y^{N}_{n+1}|X^{n-1}Y^{n}) \\
&\!\!=\! H(Y^{N}_{n+1}|X^{n-1}Y^n) \!-\! H(Y^{N}_{n+1}|X^{n}Y^n).
\end{align}
We notice that this measure is different from Massey's directed information in that it measures the influence of the current sample $X_n$ of $X^N$ at time $n$ on the
future samples $Y^{N}_{n+1}$ of $Y^N$. Kamitake's information measure is therefore directional, but not causal.


\section{Assessing Cortical Information Flow via Directional Measures}

\subsection{Causal conditioning and indirect influences in multiterminal networks}

A multi-terminal network characterizes the information flow between multiple communicating nodes with several senders and receivers. Let us denote three communicating nodes $\mathcal{X}$, $\mathcal{Y}$, and $\mathcal{Z}$ and the corresponding random processes associated with them as $X^N$, $Y^N$, and $Z^N$ respectively. \com{In our case, the information transfer over the cortex can be considered equivalent to a cortical multi-terminal network, with each ROI taking over the role of a communicating node. In the context of the cortical network the quantities $X^N$, $Y^N$, and $Z^N$, then correspond to the sampled EEG signals from different ROIs. Also without any loss of generality, $Z^N$ represents the output of multiple (and potentially all other) ROIs.
Our goal here is to identify the connectivity between the processes in the cortical network.  
In particular, there are two distinct instances of connectivity that can arise as a result of using directional informational measures.}

\begin{defn}\label{defn:directconnect}
\textit{A direct connectivity is said to exist from node $\mathcal{X}$ to node $\mathcal{Y}$ if there exists a non-zero information flow via a direct path between the nodes.}
\end{defn}
\begin{defn}
\textit{An implied connectivity arises when there is no direct path between two nodes $\mathcal{X}$ and $\mathcal{Y}$, but there is a non-zero information flow between the nodes because of an influence through other nodes in the network.}
\end{defn}
\begin{figure}[tbp]
\vspace{-1ex}
\centering
\hspace{5em}
\includegraphics[width = \columnwidth]{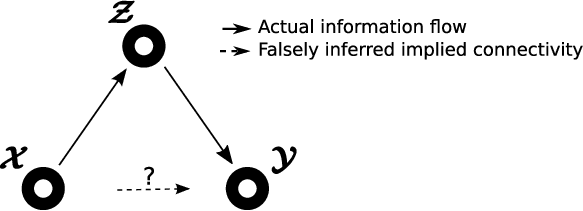}
\vspace{1ex}
\caption{\small{The relay channel is an example of a network in which simply using directed information can lead to a false inference of \textit{direct} connectivity between the nodes. In the example shown, $DI(X^N\!\rightarrow\!Y^N)>0$, even though there is no link between $X$ and $Y$.}}  \label{fig:broadrelay}
\vspace{-2ex}
\end{figure}
Therefore, a positive directed information between the random processes associated with any two nodes in a multi-node network alone does not necessarily equate to a direct connectivity between them \cite{quinn2011estimating, soltani2013inferring, amblard2014causal}. In Fig.~\ref{fig:broadrelay} we show an example network topology to illustrate how implied connectivity can lead to false inferences. In the shown relay channel there is no direct information transfer between $\mathcal{X}$ and $\mathcal{Y}$, but  the information flow is from $\mathcal{X}$ to $\mathcal{Z}$ to $\mathcal{Y}$. There is a Markovian influence $\mathcal{X} \!\rightarrow\! \mathcal{Z} \!\rightarrow\! \mathcal{Y}$. This results in a positive value for Massey's directed information, $DI_1(X^N\!\rightarrow\! Y^N) > 0$, thereby leading to implied connectivity between $\mathcal{X}$ and $\mathcal{Y}$.

Notice however that in the example presented, the knowledge of $Z^N$ leads to statistical conditional independence between $X^N$ and $Y^N$. We expand upon this idea and extend the expression for Massey's directed information to account for the influence of the additional random processes in the network via causal conditioning \cite{Kramer98}. 

\begin{defn} \label{defn:Masseycausal}
\textit{Causally conditioned Massey's directed information is defined as the information flowing from $X^N$ to $Y^N$ causally conditioned on the sequence $Z^{N-1}$ as}
\begin{align} 
DI_1&(X^N \!\rightarrow\! Y^N||Z^{N-1}) \notag \\
& \triangleq H(Y^N||Z^{N-1})-H(Y^N||X^NZ^{N-1}) \label{eq:MDI1}\\
&= \sum\limits_{n=1}^{N}D_{KL}\Big({p(y_{n}|x^{n}y^{n-1}z^{n-1})}\,||\,{p(y_{n}|y^{n-1}z^{n-1})}\Big) \label{eq:MDI2}\\
&= \sum\limits_{n=1}^{N} I(X^n;Y_n|Y^{n-1}Z^{n-1}). \label{eq:MDI3}
\end{align}
\end{defn}

\begin{proposition}[\hspace{-0.3eM}\comnew{\cite{quinn2011estimating, soltani2013inferring}}]\label{prop:impliedconn}
Assume a network as shown in Fig.~\ref{fig:broadrelay} with three nodes $\mathcal{X}$, $\mathcal{Y}$, and $\mathcal{Z}$, with corresponding random processes $X^N$, $Y^N$, and $Z^N$, respectively. Using causally conditioned Massey's directed information eliminates implied connectivity, with $DI_1(X^N \!\rightarrow\! Y^N||Z^{N-1}) = 0$ if $\mathcal{X}$~and~$\mathcal{Y}$ are not directly connected.
\end{proposition}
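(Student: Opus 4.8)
The plan is to prove the claim directly from the conditional-mutual-information representation \eqref{eq:MDI3} (equivalently, the KL form \eqref{eq:MDI2}) by showing that every summand vanishes once the absence of a direct link between $\mathcal{X}$ and $\mathcal{Y}$ is translated into a conditional-independence statement. Concretely, I would argue that ``not directly connected'' forces, for each $n$, the Markov chain $X^n \!-\! (Y^{n-1}, Z^{n-1}) \!-\! Y_n$, and then invoke the fact that the conditional mutual information of conditionally independent variables is zero.

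First, I would formalize the topology of Fig.~\ref{fig:broadrelay}: since the only route from $\mathcal{X}$ to $\mathcal{Y}$ is the relayed path $\mathcal{X}\!\rightarrow\!\mathcal{Z}\!\rightarrow\!\mathcal{Y}$, the output sample $Y_n$ is generated solely from its own past $Y^{n-1}$ and the relayed signal $Z^{n-1}$, with no direct functional dependence on $X^N$; causality of the relay additionally prevents $Y_n$ from depending on the current input $X_n$. I would collect these into the single conditional-law identity
\begin{align}
p(y_n \mid x^n y^{n-1} z^{n-1}) = p(y_n \mid y^{n-1} z^{n-1}), \qquad \forall\, n \leq N, \label{eq:markovrelay}
\end{align}
which is exactly the Markov condition stated above.

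Given \eqref{eq:markovrelay}, the conclusion would follow in one line from either representation. Substituting into the KL form \eqref{eq:MDI2}, each summand $D_{KL}\big(p(y_n|x^n y^{n-1} z^{n-1})\,||\,p(y_n|y^{n-1}z^{n-1})\big)$ compares two identical distributions and is hence zero, so the sum $DI_1(X^N\!\rightarrow\!Y^N||Z^{N-1})$ vanishes. Equivalently, \eqref{eq:markovrelay} renders $X^n$ and $Y_n$ conditionally independent given $(Y^{n-1},Z^{n-1})$, making each term $I(X^n;Y_n|Y^{n-1}Z^{n-1})$ in \eqref{eq:MDI3} equal to zero.

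The only genuinely non-routine step is establishing \eqref{eq:markovrelay}; everything after it is an immediate consequence of the vanishing-on-independence property of conditional mutual information. The interpretive point I would stress is the contrast with the \emph{unconditioned} measure: the relay induces marginal dependence between $X^N$ and $Y^N$ through $\mathcal{Z}$, so that $DI_1(X^N\!\rightarrow\!Y^N)>0$ as noted in the discussion of Fig.~\ref{fig:broadrelay}, whereas conditioning on $Z^{N-1}$ supplies exactly the screening variable that d-separates $X$ and $Y$ along the chain and forces the measure to zero. I would also take care to justify that the conditioning window $Z^{n-1}$ (rather than $Z^n$) is the correct one: the one-step relay delay guarantees that $Z^{n-1}$ already carries all of the $X^N$-information relevant to $Y_n$, which is precisely what makes \eqref{eq:markovrelay} hold with the indices as written in Definition~\ref{defn:Masseycausal}.
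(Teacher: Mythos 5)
Your proof is correct and matches the paper's reasoning: the paper itself states this proposition without a written proof (attributing it to the cited references), but its justification is exactly the observation preceding Definition~\ref{defn:Masseycausal} --- that in the relay topology, knowledge of $\mathcal{Z}$'s past renders $X^n$ and $Y_n$ conditionally independent --- which is precisely the Markov condition you formalize and then substitute into \eqref{eq:MDI2}/\eqref{eq:MDI3} to make every summand vanish. Your explicit attention to the indexing (that the unit relay delay is what makes conditioning on $Z^{n-1}$ rather than $Z^n$ sufficient) is a worthwhile addition that the paper leaves implicit.
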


Both transfer entropy and Kamitake's directed information can be extended to incorporate causal conditioning from an additional random process as well. 
\comnew{
\begin{defn}[\hspace{-0.3eM} \cite{lizier2008local, lizier2010information, vakorin2009confounding}] \textit{Causally conditioned transfer entropy and sum transfer entropy are defined respectively as}
\begin{align}  
&TE(X^{n-1} \!\rightarrow Y^n||Z^{n-1}) \notag \\
&\quad\triangleq D_{KL}\Big(  {p(y_n|y^{n-1}x^{n-1}z^{n-1})} \,||\,{p(y_n|y^{n-1}z^{n-1})} \Big) \label{eq:MTE1} \\
&\quad= I(Y_n;X^{n-1}|Y^{n-1}Z^{n-1}),  \label{eq:MTE3} \\
&{TE}^\ast(X^{N-1} \!\rightarrow Y^N||Z^{N-1}) \notag \\
&\quad\triangleq  \sum\limits_{n=1}^{N} TE(X^{n-1} \!\rightarrow Y^n||Z^{n-1}). \label{eq:MTE4}
\end{align}
\end{defn}
}
\begin{defn} \label{defn:CCDI2} \textit{Causally conditioned Kamitake's directed information is defined as} 
\begin{align}\label{eq:defDI2}
DI_2(X^N\!\rightarrow\! Y^N||Z^{N-1}) \!\triangleq\! \sum\limits_{n=1}^{N} I(X_n;Y^{N}_{n+1}|X^{n-1}Y^{n}Z^{n-1}).
\end{align}
\end{defn}

Similar to Massey's directed information, causally conditioned transfer entropy and Kamitake's directed information can also be used to eliminate false inferences resulting from implied connectivity \cite{sakata2002multidimensional, vakorin2009confounding}. The proof follows along the exact same lines as Proposition\!~\ref{prop:impliedconn} and is omitted here.
\com{Also, note that despite the conditioning on a causal sequence $Z^{n-1}$, Kamitake's directed information measure is not strictly causal due to the $Y^{N}_{n+1}$ term.} 

\begin{figure}[tbp]
 \captionsetup[subfigure]{justification=centering}
 \centering
 \subfigure[]{%
    \includegraphics[scale=0.37]{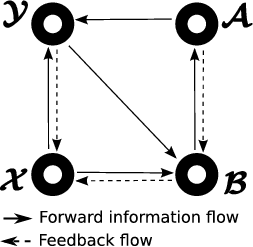}
  } 
  \\
  \subfigure[]{%
    \includegraphics[scale=0.37]{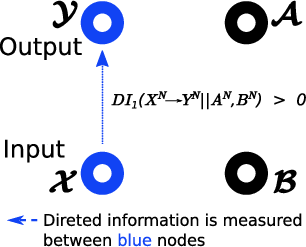}
  }
  \,\,
  \subfigure[]{%
    \includegraphics[scale=0.37]{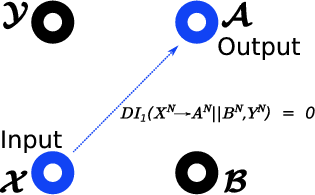}
  }
  \caption{\small{(a) An example of a multiterminal network with information flow existing between several nodes, for which we estimate the underlying causal connectivity. The solid arrows denote the direction of the forward information flow, while the dashed arrows represent feedback flow between the nodes. (b), (c) Pairwise conditional directed information calculated by choosing an input and output node, while using causal conditioning to account for the influence of all other nodes.}}  \label{fig:pairwise}
  \vspace{-4ex}
\end{figure}

Now we discuss how to apply causally conditioned directional information measures in order to estimate the functional connectivity between the ROI network. In Fig.~\ref{fig:pairwise}(a) we show an example communication network with four interacting nodes. The information flow is depicted using solid arrows and the feedback using dashed arrows, respectively. Also, some nodes do not have a direct link in between them, for example $\mathcal{X}$ to $\mathcal{A}$. In our cortical network model the nodes represent the ROIs, and the quantities $X^N$, $Y^N$, $A^N$ and $B^N$, resp., represent the random processes describing the sampled output of the EEG signals in each ROI. Our goal is then to infer the causal connectivity between the ROIs given the EEG recordings from each region.  
Towards this end we propose calculating the pairwise conditional directed information by choosing an input and output node, while using causal conditioning to account for the influence of all other nodes. For example, to estimate the connectivity from $\mathcal{X}$ to $\mathcal{Y}$ we calculate $DI_1(X^N\!\rightarrow\!Y^N||A^NB^N)$ as shown in Fig.~\ref{fig:pairwise}(b). Since there is non-zero information flow between these two nodes we expect the directed information to return a positive value. Similarly in Fig.~\ref{fig:pairwise}(c), computing $DI_1(X^N\!\rightarrow\!A^N||Y^NB^N)$ yields a zero value since there is no direct connectivity between these two nodes. Repeating this procedure pairwise for all nodes provides a functional connectivity graph representative of the directional information flow over the entire ROI network. 

\vspace{-2ex}
\subsection{Relationship between different measures}


The first relation that we are interested in is between Massey's and Kamitake's directed information measures as examined in \cite{al2008relationship, solo2008causality}. We extend this analysis to include causal conditioning on $Z^{N-1}$ and show its connection to causally conditioned MI.
\begin{proposition}\label{prop:DI2rel}
The relation between causally conditioned Massey's and causally conditioned Kamitake's directed information is given by
\begin{multline} \label{eq:DI2rel1}               
DI_2(Y^N\rightarrow X^N||Z^{N-1}) +   DI_1(X^N\rightarrow Y^N||Z^{N-1})  \\ 
			\quad	=   DI_2(X^N\rightarrow Y^N||Z^{N-1}) +   DI_1(Y^N\rightarrow X^N||Z^{N-1}).     
\end{multline}
\end{proposition}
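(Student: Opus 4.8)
The plan is to establish the identity by collapsing each side of the equation into a single causally conditioned mutual information, and then showing that the two resulting expressions agree. This mirrors the unconditioned Massey--Kamitake relationship of \cite{al2008relationship, solo2008causality}, which I would extend by carrying the causal conditioning on $Z^{N-1}$ through every manipulation. The engine of the argument is the chain rule for (conditional) mutual information, applied twice: once to merge a Massey term with a Kamitake term, and once more to recognize the merged sum as a mutual information.

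Concretely, I would expand the left-hand side using the mutual-information forms of the two measures, $DI_1(X^N\!\rightarrow\! Y^N||Z^{N-1})=\sum_{n=1}^N I(X^n;Y_n|Y^{n-1}Z^{n-1})$ from \eqref{eq:MDI3} and $DI_2(Y^N\!\rightarrow\! X^N||Z^{N-1})=\sum_{n=1}^N I(Y_n;X^N_{n+1}|Y^{n-1}X^nZ^{n-1})$ from \eqref{eq:defDI2}. At each time index $n$ I would first use the symmetry of mutual information to rewrite the Massey summand as $I(Y_n;X^n|Y^{n-1}Z^{n-1})$, so that it shares the conditioning context $Y^{n-1}Z^{n-1}$ with the Kamitake summand. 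The key step is then the chain rule $I(Y_n;X^n|Y^{n-1}Z^{n-1})+I(Y_n;X^N_{n+1}|X^nY^{n-1}Z^{n-1})=I(Y_n;X^N|Y^{n-1}Z^{n-1})$, which fuses the ``past-and-present'' contribution of Massey and the ``future'' contribution of Kamitake into the single term $I(Y_n;X^N|Y^{n-1}Z^{n-1})$. Summing over $n$ collapses the entire left-hand side to $\sum_{n=1}^N I(Y_n;X^N|Y^{n-1}Z^{n-1})$. Performing the identical manipulation on the right-hand side, merging $DI_1(Y^N\!\rightarrow\! X^N||Z^{N-1})$ with $DI_2(X^N\!\rightarrow\! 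Y^N||Z^{N-1})$, yields $\sum_{n=1}^N I(X_n;Y^N|X^{n-1}Z^{n-1})$. The identity would then follow from showing that these two collapsed sums are equal.

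I expect this final identification to be the main obstacle. In the degenerate case where $Z$ is absent, both sums reduce by the chain rule \eqref{eq:chain} to the symmetric quantity $I(X^N;Y^N)$, which is exactly why the unconditioned relation holds; but with causal conditioning the two expressions carry a \emph{different} strictly-past context ($Y^{n-1}Z^{n-1}$ on the left, $X^{n-1}Z^{n-1}$ on the right) at each index, so the chain rule no longer telescopes them into a manifestly symmetric object. The hard part is therefore to verify that the asymmetric roles of $X$ and $Y$, combined with the changing conditioning block $Z^{n-1}$, still yield equal totals. I would attack this by rewriting both sums in the causally conditioned entropy notation used throughout Sec.~III, expanding each mutual information as a difference of conditional entropies and tracking whether the cross terms cancel; particular care is needed to confirm that the \emph{strictly past} conditioning $Z^{n-1}$ (rather than $Z^{n}$) is what makes the bookkeeping close, since this is the delicate point on which the symmetry—and hence the proposition—ultimately hinges.
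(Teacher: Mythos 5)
Your reduction of both sides is correct and is in fact the paper's own route: the paper likewise collapses the left side of \eqref{eq:DI2rel1} to $\sum_{n=1}^N I(X^N;Y_n|Y^{n-1}Z^{n-1})$ (its step \eqref{eq:DI2rel3}, obtained via the entropy chain rule rather than your MI chain rule) and, symmetrically, the right side to $\sum_{n=1}^N I(Y^N;X_n|X^{n-1}Z^{n-1})$. However, the step you flagged as the ``main obstacle'' is not merely hard---it is false, so no amount of entropy bookkeeping will close it. Take $N=2$, let $X_1\equiv 0$ and $Y_2\equiv 0$ be constants, let $Y_1,Z_1$ be independent fair bits, and generate $X_2=Y_1\oplus Z_1$ (a causally generated system: $X_2$ is a function of time-$1$ variables). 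The left collapsed sum is $I(X_1X_2;Y_1)+I(X_1X_2;Y_2|Y_1Z_1)=I(X_2;Y_1)+0=0$, since the one-time pad $Z_1$ makes $X_2$ independent of $Y_1$, while the right collapsed sum is $I(Y_1Y_2;X_1)+I(Y_1Y_2;X_2|X_1Z_1)=0+I(Y_1;X_2|Z_1)=1$ bit. The proposition itself fails here: computing the four measures directly from Definitions~\ref{defn:Masseycausal} and~\ref{defn:CCDI2} gives $DI_1(X^N\rightarrow Y^N||Z^{N-1})=DI_2(Y^N\rightarrow X^N||Z^{N-1})=DI_2(X^N\rightarrow Y^N||Z^{N-1})=0$ but $DI_1(Y^N\rightarrow X^N||Z^{N-1})=1$, so \eqref{eq:DI2rel1} would read $0=1$.

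Your suspicion about where the difficulty sits is exactly right, and it is also where the paper's own argument breaks. The paper bridges the two collapsed sums by expanding \eqref{eq:DI2rel3} into the double sum \eqref{eq:DI2rel4}, whose $(i,n)$ summand $I(X_i;Y_n|X^{i-1}Y^{n-1}Z^{n-1})$ carries a $Z$-history tied to the index of $Y_n$, and then asserts that ``interchanging the order of summation'' yields \eqref{eq:DI2rel5}. But reassembling $\sum_i I(Y^N;X_i|X^{i-1}Z^{i-1})$ via the chain rule requires every summand to carry $Z^{i-1}$, i.e., a $Z$-history tied to the index of $X_i$; swapping two finite sums cannot convert $Z^{n-1}$ into $Z^{i-1}$. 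If $Z$ is absent, or is conditioned in full as $Z^{N-1}$ in every term, both sums telescope to $I(X^N;Y^N)$ resp.\ $I(X^N;Y^N|Z^{N-1})$ and the identity holds---this is precisely the unconditioned relation of \cite{al2008relationship, solo2008causality}---but under strictly causal conditioning it fails, as the example shows; the same flaw undermines the asserted equality of the two expressions in \eqref{eq:causalMI} of Definition~\ref{defn:causalMI} and hence Corollary~\ref{prop:causalcondMI}. In short: your proof plan is incomplete, but its missing step is unprovable; pushing your own plan to its end uncovers that the proposition as stated is incorrect.
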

\begin{proof} Using the definition in (\ref{eq:defDI2}) and rewriting in terms of entropy
	\begin{align} 
	&DI_2(Y^N\!\rightarrow\! X^N||Z^{N-1}) \notag \\
	&=\sum\limits_{n=1}^{N} I(Y_n;X^{N}_{n+1}|X^{n}Y^{n-1}Z^{n-1}) \\
	&= \sum\limits_{n=1}^{N}\left[ H(X^{N}_{n+1}|X^{n}Y^{n-1}Z^{n-1})-H(X^{N}_{n+1}|X^{n}Y^{n}Z^{n-1})\right] \\
	&= \sum\limits_{n=1}^{N} \left[ H(X^{N}|Y^{n-1}Z^{n-1})-H(X^{n}|Y^{n-1}Z^{n-1}) \right. \notag  \\
	& \phantom{MMM11} - \left. H(X^{N}|Y^{n}Z^{n-1})+H(X^{n}|Y^{n}Z^{n-1}) \right] \label{eq:DI2rel2} \\
	&= \sum\limits_{n=1}^{N} I(X^{N};Y_n|Y^{n-1}Z^{n-1}) - DI_1(X^N\!\rightarrow\! Y^N||Z^{N-1}),\phantom{M}
	\end{align}
	where we have used the chain rule of entropy in \eqref{eq:DI2rel2}. Rearranging yields
	\begin{align}
	DI_2&(Y^N \!\rightarrow\! X^N||Z^{N-1}) +   DI_1(X^N\!\rightarrow\! Y^N||Z^{N-1}) \notag \\ 
	&= \sum\limits_{n=1}^{N} I(X^{N};Y_n|Y^{n-1}Z^{n-1}) \label{eq:DI2rel3} \\ 
	&=\sum\limits_{n=1}^{N}\sum\limits_{i=1}^{N}I(X_i;Y_n|X^{i-1}Y^{n-1}Z^{n-1}) \label{eq:DI2rel4} \\ 
	&=\sum\limits_{n=1}^{N} I(Y^{N};X_n|X^{n-1}Z^{n-1}) \label{eq:DI2rel5}\\
	&=   DI_2(X^N\!\rightarrow\! Y^N||Z^{N-1}) +   DI_1(Y^N\!\rightarrow\! X^N||Z^{N-1}),    
	\end{align}
	where \eqref{eq:DI2rel4} follows by using the chain rule of MI, and \eqref{eq:DI2rel5} from interchanging the order of summation.
\end{proof}
 \vspace{-2ex}
\begin{defn} \label{defn:causalMI}
\textit{Causally conditioned MI is defined as the MI between $X^N$ and $Y^N$ causally conditioned on the sequence $Z^{N-1}$}
\begin{align}
I(X^N;Y^N||Z^{N-1}) &\triangleq \sum\limits_{n=1}^{N} I(X^N;Y_n|Y^{n-1}Z^{n-1}) \notag \\
&= \sum\limits_{n=1}^{N} \label{eq:causalMI} I(X_n;Y^N|X^{n-1}Z^{n-1}).  
\end{align}
\end{defn}

\comone{The following corollary then expresses causally conditioned MI in terms of directed information and follows directly from \eqref{eq:DI2rel3}-\eqref{eq:DI2rel5} in the proof of Proposition \ref{prop:DI2rel}.}
\begin{corollary} \label{prop:causalcondMI}
Causally conditioned MI is the sum of causally conditioned Massey's directed information from $X^N$ to $Y^N$, and causally conditioned Kamitake's directed information in the opposite direction:
\begin{align}
I&(X^N;Y^N||Z^{N-1}) \notag \\ &= DI_1(X^N \!\rightarrow\! Y^N||Z^{N-1}) +   DI_2(Y^N\!\rightarrow\! X^N||Z^{N-1}).
\end{align}
\end{corollary}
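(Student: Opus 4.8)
The plan is to read the result off directly from the intermediate identity already established in the proof of Proposition~\ref{prop:DI2rel}, so that essentially no new computation is needed. The key observation is that equation~\eqref{eq:DI2rel3} in that proof isolates exactly the quantity
\[
DI_1(X^N \rightarrow Y^N||Z^{N-1}) + DI_2(Y^N \rightarrow X^N||Z^{N-1})
\]
and equates it to $\sum_{n=1}^{N} I(X^{N};Y_n|Y^{n-1}Z^{n-1})$. First I would restate \eqref{eq:DI2rel3} verbatim, taking it as a given consequence of the preceding proposition.

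Next I would invoke Definition~\ref{defn:causalMI}. The first expression given there for causally conditioned MI is precisely $I(X^N;Y^N||Z^{N-1}) = \sum_{n=1}^{N} I(X^N;Y_n|Y^{n-1}Z^{n-1})$, which matches the right-hand side of \eqref{eq:DI2rel3} term by term. Substituting this definition into \eqref{eq:DI2rel3} immediately produces the claimed identity, completing the argument. In effect the corollary is a direct relabeling of an equation that was derived en route to Proposition~\ref{prop:DI2rel}, and I would present it as such rather than re-deriving anything.

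There is no real analytical obstacle, since the heavy lifting—the entropy chain-rule expansion and the reindexing of the double sum—was already carried out when proving Proposition~\ref{prop:DI2rel}. The only point that warrants a moment of care is confirming that the two equivalent forms of causally conditioned MI in \eqref{eq:causalMI} are consistent, namely that $\sum_{n} I(X^N;Y_n|Y^{n-1}Z^{n-1})$ equals $\sum_{n} I(X_n;Y^N|X^{n-1}Z^{n-1})$. This follows from the same chain-rule-of-MI expansion and interchange of summation used to pass from \eqref{eq:DI2rel4} to \eqref{eq:DI2rel5}, so I would either cite those steps or simply note that the corollary only requires the first of the two forms, for which the substitution is immediate.
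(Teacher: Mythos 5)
Your proposal is correct and matches the paper's own treatment: the paper states that the corollary ``follows directly from \eqref{eq:DI2rel3}--\eqref{eq:DI2rel5} in the proof of Proposition~\ref{prop:DI2rel},'' which is precisely your argument of reading off \eqref{eq:DI2rel3} and substituting the first form of Definition~\ref{defn:causalMI}. Your closing remark that the equivalence of the two forms in \eqref{eq:causalMI} rests on the same chain-rule and summation-interchange steps is also consistent with how the paper justifies them.
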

\comthree{There also exists a connection between between Massey's directed information and transfer entropy as shown in \cite{liu2012relationship,amblard2012relation, wibral2014directed}, which we extend and state for causal conditioning as follows.}
\begin{proposition}\label{prop:temassey}
	The relation between causally conditioned Massey's directed information and causally conditioned transfer entropy is given by
\begin{align} \label{eq:lem1} 
DI_1&(X^N \!\rightarrow\! Y^N||Z^{N-1}) \notag \\
&= \sum\limits_{n=1}^{N}\left\{ TE(X^{n-1} \!\rightarrow\! Y^n||Z^{n-1}) \right. \notag \\
& \qquad \left. + I(X_n;Y_n|X^{n-1}Y^{n-1}Z^{n-1})\right\}.
\end{align}
\end{proposition}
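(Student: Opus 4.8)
The plan is to prove the claimed identity \emph{termwise} for each fixed index $n$ and then sum, using only the chain rule of mutual information. Beginning from the definition of causally conditioned Massey's directed information in \eqref{eq:MDI3}, the $n$-th summand is $I(X^n;Y_n|Y^{n-1}Z^{n-1})$, so it suffices to decompose this single term and recombine afterwards.

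The first step is to split the argument as $X^n=(X^{n-1},X_n)$ and apply the chain rule of mutual information, conditioning throughout on $Y^{n-1}Z^{n-1}$:
\begin{align*}
I(X^n;Y_n|Y^{n-1}Z^{n-1}) &= I(X^{n-1};Y_n|Y^{n-1}Z^{n-1}) \\
&\quad + I(X_n;Y_n|X^{n-1}Y^{n-1}Z^{n-1}).
\end{align*}

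The second step is to recognize the two resulting terms. By the symmetry of mutual information, $I(X^{n-1};Y_n|Y^{n-1}Z^{n-1})=I(Y_n;X^{n-1}|Y^{n-1}Z^{n-1})$, which is exactly the causally conditioned transfer entropy $TE(X^{n-1}\!\rightarrow\! Y^n||Z^{n-1})$ as given in \eqref{eq:MTE3}. The remaining term is already in the form $I(X_n;Y_n|X^{n-1}Y^{n-1}Z^{n-1})$ appearing in the statement. Summing over $n=1,\dots,N$ then reproduces \eqref{eq:lem1}.

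I do not anticipate any real obstacle: the result is a direct consequence of the chain rule, and the only point deserving care is the bookkeeping of the conditioning sets. Specifically, transfer entropy conditions on the \emph{past} input block $X^{n-1}$ rather than on the full block $X^n$, and it is precisely this gap that is filled by the instantaneous term $I(X_n;Y_n|X^{n-1}Y^{n-1}Z^{n-1})$, which captures the simultaneous dependence between $X_n$ and $Y_n$ that Massey's directed information retains but transfer entropy discards.
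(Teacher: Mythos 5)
Your proof is correct. Note that the paper states Proposition~\ref{prop:temassey} without giving an explicit proof (it cites \cite{liu2012relationship, amblard2012relation, wibral2014directed} and extends the known unconditional relation to causal conditioning), and your argument --- applying the chain rule to the summand $I(X^n;Y_n|Y^{n-1}Z^{n-1})$ from \eqref{eq:MDI3} to obtain $I(X^{n-1};Y_n|Y^{n-1}Z^{n-1}) + I(X_n;Y_n|X^{n-1}Y^{n-1}Z^{n-1})$ and identifying the first term with the causally conditioned transfer entropy via \eqref{eq:MTE3} --- is exactly the standard derivation the authors intend.
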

\comthree{We observe that the directed information is a sum over the transfer entropy and an additional term describing the conditional undirected MI between $X_n$ and $Y_n$. Massey's directed information as defined in Sec.~III-A was originally intended to measure the dependency between two $N$ length sequences. If we now relax this constraint to instead suitably measure the flow from a $N-1$ length sequence $X^{N-1}$ to a $N$ length sequence $Y^N$, while causally conditioned on $Z^{N-1}$, we then have a modified interpretation for \comnew{causally conditioned} Massey's directed information which we denote as $DI'_1$ and define as follows,
\begin{align} 
DI'_1(X^{N-1} \!\rightarrow\! Y^N||Z^{N-1}) & = {TE}^\ast(X^{N-1} \rightarrow Y^N||Z^{N-1}) \label{eq:wibral}
\end{align}
where the equality in \eqref{eq:wibral} follows directly from from \eqref{eq:MTE3}. 
Further, if we assume the sequences to be stationary and infinitely long and that the limit for $N \rightarrow \infty$ exists, then asymptotically it can be shown \cite{amblard2012relation, wibral2014directed} that the information rates for \comnew{causally conditioned Massey's modified directed information $DI'_1$ and causally conditioned transfer entropy \eqref{eq:MTE1} are in fact equal.}}

\section{Causal Bidirectional Information Flow in {EEG}}




\comone{In the following we propose an alternative bidirectional measure for estimating the causal dependency between the ROIs. This measure is motivated by the analysis of the three-terminal multiple access channel (MAC) with feedback, an important canonical building structure in networked communication.}

\vspace{-2ex}
\subsection{Causal bidirectional information (CBI)}
\comone{In order to derive the CBI measure let us first consider a preliminary result which uses causally conditioned directed information to express the information rate for such a MAC with feedback.} Fig.~\ref{fig:MAC} shows a two user MAC with feedback, with channel inputs $\mathcal{X}$ and $\mathcal{Y}$, and corresponding output $\mathcal{Z}$. The capacity region for the two user discrete memoryless MAC with feedback can be lower bounded using directed information, in a form similar to the standard cut-set bound for the MAC without feedback \cite{cover2012elements}. The information rate from $X^N$ to $Z^N$ is shown to be \cite{Kramer98}, $R_1 \leq \frac{1}{N}DI(X^N\rightarrow Z^N||Y^N)$ for all
\begin{align}\label{eq:factorization:main}
p(x_{n}&y_{n}|x^{n-1}y^{n-1}z^{n-1}) \notag \\
&= p(x_{n}|x^{n-1}z^{n-1}) \cdot p(y_{n}|y^{n-1}z^{n-1}), \quad\, n\leq N.
\end{align}
The other rate $R_2$ from $\mathcal{Y}$ to $\mathcal{Z}$ and the sum rate $R_1+R_2$ can be found in \cite{Kramer98} and are not of interest for the following discussion.

\begin{figure}[htb]
\centering
\includegraphics[scale = 0.44]{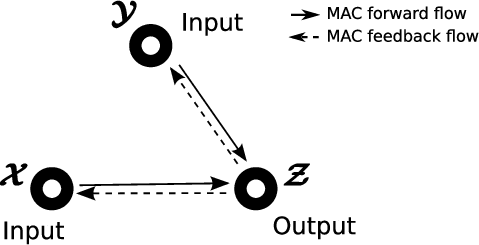}
\vspace{2ex}
\caption{A multiple access channel with feedback.}
\label{fig:MAC}
\vspace*{-1ex}
\end{figure}


\begin{figure}[tbp]
\centering \hspace{-3eM}
\subfigure[]{\includegraphics[scale = 0.45]{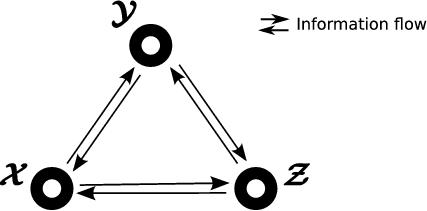}%
\label{fig_first_case}}
\\
\subfigure[]{\includegraphics[scale = 0.45]{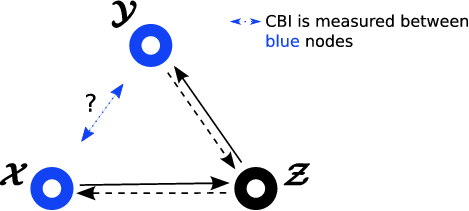}%
\label{fig_second_case}}
\caption{\small{(a) A general multi-terminal network whose connectivity we are interested in. All three communicating nodes send and receive information from each other. (b) CBI infers the connectivity between $\mathcal{X}$ and $\mathcal{Y}$ by calculating the divergence of the observed joint distribution on the network of Fig.~\!\ref{fig:CBIchannel}(a) from the one of a MAC with feedback in \eqref{eq:factorization:main}.}}
\label{fig:CBIchannel}
\vspace{-4ex}
\end{figure}

Now, consider the scenario of a general multi-terminal network as shown in Fig.~\ref{fig:CBIchannel}(a), where each node sends information and receives feedback from every other node in the network. CBI considers the three node network of Fig.~\!\ref{fig:CBIchannel}(a) and measures the information flow in between nodes  $\mathcal{X}$ and $\mathcal{Y}$ by using the MAC as a reference, as shown in Fig.~\!\ref{fig:CBIchannel}(b). In particular, we point our attention towards the relation in (\ref{eq:factorization:main}) specifying the relation between the joint distribution of the two inputs of the MAC.
The conditional independence of the inputs in the factorization of \eqref{eq:factorization:main} arises due to the causal nature of the feedback structure, where the output at the receiver $Z^{n-1}$ is available causally at both $\mathcal{X}$ and $\mathcal{Y}$ for each time $n$. For the case of a MAC with feedback there is no direct connectivity (path) between the inputs.
Any violation of \eqref{eq:factorization:main} creates dependencies between $\mathcal{X}$ and $\mathcal{Y}$, and these dependencies can be measured by the KL divergence between the joint distribution on the l.h.s. of \eqref{eq:factorization:main} and the factorization on the r.h.s. of \eqref{eq:factorization:main}. This result is summarized in the following definition.

\begin{defn} \label{defn:KL}
\comone{\textit{Consider a multi-terminal network with a source node $\mathcal{X}$, a destination node $\mathcal{Y}$, and a group of nodes $\mathcal{Z}$ interacting casually with $\mathcal{X}$ and $\mathcal{Y}$ as shown in Fig.~\!\ref{fig:CBIchannel}(a). CBI calculates the KL divergence between the induced distributions of the observed conditional distribution $p(x_ny_n|x^{n-1}y^{n-1}z^{n-1})$ and the one for an underlying MAC with feedback \eqref{eq:factorization:main}}:}
\begin{align} 
\begin{split}
I(X^N \!\leftrightharpoons\! Y^N || Z^{N-1}) = D_{KL}\Big(p(x_ny_n|x^{n-1}y^{n-1}z^{n-1})\,|| \\
\,p(x_n|x^{n-1}z^{n-1})p(y_n|y^{n-1}z^{n-1}) \Big) \end{split} \label{eq:nm1} \\
&=  \sum\limits_{n=1}^{N} \mathbb{E} \left[\log\frac{p(X_{n}Y_{n}|X^{n-1}Y^{n-1}Z^{n-1})}{p(X_{n}|X^{n-1}Z^{n-1})\, p(Y_{n}|Y^{n-1}Z^{n-1})}\right] \label{eq:nm2}  \\
&= H(X^N||Z^{N-1}) + H(Y^N||Z^{N-1}) - H(X^NY^N||Z^{N-1}).  \label{eq:nm3} 
\end{align}
\end{defn}

\comone{Therefore, CBI ascertains the \textit{direct connectivity} (as per Def.~\ref{defn:directconnect}) between two nodes in a general multi-node network and is zero if and only if the following two conditions are satisfied:}
\begin{enumerate}[label=(\roman*)]
	\item \comone{$X^n$ and $Y^n$ are independent for all $n \leq N$, i.e., there is no information flow between the two nodes.}
	
	\item \comone{There is no direct link between $\mathcal{X}$ and $\mathcal{Y}$ and all information flows only via an additional node $\mathcal{Z}$.}
\end{enumerate}

\comone{In the following proposition we show that, by definition, CBI is inherently a casual bidirectional modification of Massey's directed information.}

\begin{proposition}\label{prop:CBI}
	\comone{\textit{Causal bidirectional information (CBI) is the sum of causally conditioned Massey's directed information between $X^N$ and $Y^N$, and the sum transfer entropy in the reverse direction}:}
	
	\begin{align}
	&I(X^N \!\leftrightharpoons\! Y^N || Z^{N-1}) \notag \\*
	&\triangleq DI_1(X^N \!\rightarrow\! Y^N||Z^{N-1}) + \sum\limits_{n=1}^{N} \Big\{ TE(Y^n \!\rightarrow\! X^n||Z^{n-1}) \Big\}.
	\end{align}
	
\end{proposition}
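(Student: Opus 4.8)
The plan is to prove the identity by factoring the single joint conditional distribution inside the CBI definition via the chain rule, thereby splitting CBI into a forward term that I will recognize as Massey's causally conditioned directed information and a reverse term that I will recognize as the summed transfer entropy. I would start from the expectation form of CBI in \eqref{eq:nm2} and factor the numerator as
\[
p(x_ny_n|x^{n-1}y^{n-1}z^{n-1}) = p(y_n|x^ny^{n-1}z^{n-1})\,p(x_n|x^{n-1}y^{n-1}z^{n-1}),
\]
which lets me write the logarithm inside the expectation as a sum of two logarithms, once I pair the factor $p(y_n|x^ny^{n-1}z^{n-1})$ with the denominator factor $p(y_n|y^{n-1}z^{n-1})$ and the factor $p(x_n|x^{n-1}y^{n-1}z^{n-1})$ with $p(x_n|x^{n-1}z^{n-1})$.

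Then I would handle the two resulting sums separately. The first, $\sum_n \mathbb{E}\!\left[\log \frac{p(Y_n|X^nY^{n-1}Z^{n-1})}{p(Y_n|Y^{n-1}Z^{n-1})}\right]$, equals $\sum_n I(X^n;Y_n|Y^{n-1}Z^{n-1})$, which is exactly $DI_1(X^N\!\rightarrow\! Y^N||Z^{N-1})$ by \eqref{eq:MDI2}--\eqref{eq:MDI3}. The second, $\sum_n \mathbb{E}\!\left[\log \frac{p(X_n|X^{n-1}Y^{n-1}Z^{n-1})}{p(X_n|X^{n-1}Z^{n-1})}\right]$, equals $\sum_n I(X_n;Y^{n-1}|X^{n-1}Z^{n-1})$, which is the reverse-direction transfer entropy summed over all history depths by \eqref{eq:MTE3} (with the roles of $X$ and $Y$ interchanged). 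Combining the two sums then yields the claimed decomposition.

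Alternatively, and perhaps more transparently, I could work entirely at the level of causally conditioned entropies starting from \eqref{eq:nm3}. The key step is to expand the joint causally conditioned entropy $H(X^NY^N||Z^{N-1}) = \sum_n H(X_nY_n|X^{n-1}Y^{n-1}Z^{n-1})$ by the chain rule as $\sum_n H(X_n|X^{n-1}Y^{n-1}Z^{n-1}) + H(Y^N||X^NZ^{N-1})$. Substituting into \eqref{eq:nm3}, the combination $H(Y^N||Z^{N-1}) - H(Y^N||X^NZ^{N-1})$ is precisely $DI_1(X^N\!\rightarrow\! Y^N||Z^{N-1})$ by \eqref{eq:MDI1}, while $H(X^N||Z^{N-1}) - \sum_n H(X_n|X^{n-1}Y^{n-1}Z^{n-1}) = \sum_n I(X_n;Y^{n-1}|X^{n-1}Z^{n-1})$ is again the summed reverse transfer entropy, recovering the same identity.

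I expect the only delicate point to be the careful bookkeeping of the causal-conditioning index sets; in particular, tracking when the conditioning runs over $X^n$ versus $X^{n-1}$ and over $Z^{n-1}$, and verifying that the chain-rule split of the joint causally conditioned entropy produces exactly $H(Y^N||X^NZ^{N-1})$ with the correct causal offset. Once the indices are matched against the definitions in \eqref{eq:MDI1}--\eqref{eq:MDI3} and \eqref{eq:MTE1}--\eqref{eq:MTE3}, no estimation is needed and the result follows as an exact identity.
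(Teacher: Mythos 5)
Your proposal is correct and is essentially the paper's own proof run in reverse: the paper starts from $DI_1(X^N\!\rightarrow\! Y^N||Z^{N-1}) + {TE}^\ast(Y^{N-1}\!\rightarrow\! X^N||Z^{N-1})$, writes each summand as an expected log-ratio via \eqref{eq:MDI2} and \eqref{eq:MTE1}, and merges them into \eqref{eq:nm2} using exactly the chain-rule factorization $p(x_ny_n|x^{n-1}y^{n-1}z^{n-1}) = p(y_n|x^ny^{n-1}z^{n-1})\,p(x_n|x^{n-1}y^{n-1}z^{n-1})$ that you use to split CBI. Your second, entropy-level variant is the same identity expressed through \eqref{eq:nm3} rather than the expectation form, and your index bookkeeping (pairing the reverse term with $I(X_n;Y^{n-1}|X^{n-1}Z^{n-1}) = TE(Y^{n-1}\!\rightarrow\! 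X^n||Z^{n-1})$) matches the paper's.
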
 
\begin{proof}
	We start with
	\begin{align}
	&DI_1(X^N \!\rightarrow\! Y^N||Z^{N-1}) +  {TE}^\ast(Y^{N-1} \!\rightarrow\! X^N||Z^{N-1})\notag\\
	&= \sum\limits_{n=1}^{N}\! \Big\{ I(Y_n;X^n|Y^{n-1}Z^{n-1}) + TE(Y^{n-1} \!\rightarrow\! X^n||Z^{n-1}) \Big\}, \label{eq:nm_phi}
	\end{align}
	where the equality in \eqref{eq:nm_phi} follows from (\ref{eq:MDI3}) in Definition~\ref{defn:Masseycausal}. Denoting the term inside the summation on the r.h.s. of (\ref{eq:nm_phi}) by $\Phi(\cdot)$ and rewriting yields 
	\begin{align} 
	&\Phi(x^n,y^n,z^{n-1}) \notag \\
	&= I(Y_n;X^n|Y^{n-1}Z^{n-1}) + TE(Y^{n-1} \!\rightarrow\! X^n||Z^{n-1}) \label{eq:phi1} \\
	&= \mathbb{E}\left[ \log \frac{p(Y_n|X^nY^{n-1}Z^{n-1})}{p(Y_n|Y^{n-1}Z^{n-1})}\right] \notag \\
	& \qquad  + \mathbb{E}\left[ \log \frac{p(X_n|Y^{n-1}X^{n-1}Z^{n-1})}{p(X_n|X^{n-1}Z^{n-1})} \right]\label{eq:phi2} \\
	&= \mathbb{E}\left[ \log \frac{p(Y_nX_n|Y^{n-1}X^{n-1}Z^{n-1})}{p(Y_n|Y^{n-1}Z^{n-1})p(X_n|X^{n-1}Z^{n-1})} \right], \label{eq:phi4} 
	\end{align}
	where in \eqref{eq:phi2} we have made use of \eqref{eq:MDI2} and \eqref{eq:MTE1}, respectively, and \eqref{eq:phi4} follows from the chain rule of joint probability
	\begin{align}
	p&(y_nx_n|x^{n-1}y^{n-1}z^{n-1}) \notag \\
	&= p(y_n|x_n x^{n-1}y^{n-1}z^{n-1})\cdot p(x_n|x^{n-1}y^{n-1}z^{n-1}). \label{eq:jointprob1}
	\end{align}
	Taking the summation on \eqref{eq:phi4} and comparing with \eqref{eq:nm2} proves the claim.
\end{proof} 

\vspace{-2ex}
\begin{corollary}
	CBI is a symmetric measure \comnew{
	\begin{align}
	&I(X^N \!\leftrightharpoons\! Y^N || Z^{N-1}) \notag \\
	&= DI_1(X^N \!\rightarrow\! Y^N||Z^{N-1}) +  {TE}^\ast(Y^{N-1} \!\rightarrow\! X^N||Z^{N-1})\\
	&=  DI_1(Y^N \!\rightarrow\! X^N||Z^{N-1}) + {TE}^\ast(X^{N-1} \!\rightarrow\! Y^N||Z^{N-1}).
	\end{align}}
\end{corollary}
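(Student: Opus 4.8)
The plan is to reduce the claimed two-line identity to a single symmetry observation about the \emph{defining} expression of CBI, rather than manipulating the sum of $DI_1$ and ${TE}^\ast$ directly. The latter two summands are individually asymmetric in $X$ and $Y$, so a term-by-term attack would be awkward and unilluminating. Note first that the opening equality in the statement is nothing but Proposition~\ref{prop:CBI} applied to the ordered pair $(X^N, Y^N)$, so no work is required there. The entire content of the corollary is therefore the second equality, which I would obtain by showing that $I(X^N \leftrightharpoons Y^N || Z^{N-1})$ is invariant under exchanging the two processes.

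First I would return to the KL-divergence form of CBI in \eqref{eq:nm2}, i.e.\ the sum over $n$ of the expectation of $\log \frac{p(X_n Y_n | X^{n-1} Y^{n-1} Z^{n-1})}{p(X_n | X^{n-1} Z^{n-1})\, p(Y_n | Y^{n-1} Z^{n-1})}$. The key step is to observe that both the numerator and the denominator of this log-ratio are invariant under the substitution $X \leftrightarrow Y$. The numerator is the joint conditional law of the pair $(X_n, Y_n)$ given the conditioning set $X^{n-1} Y^{n-1} Z^{n-1}$, which is itself symmetric under the swap; the denominator is a product of two causally conditioned marginals whose two factors simply exchange roles. Hence each summand is unchanged, and therefore so is the full sum, giving $I(X^N \leftrightharpoons Y^N || Z^{N-1}) = I(Y^N \leftrightharpoons X^N || Z^{N-1})$.

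Finally I would close the argument by invoking Proposition~\ref{prop:CBI} a second time, now applied to the ordered pair $(Y^N, X^N)$, which expands $I(Y^N \leftrightharpoons X^N || Z^{N-1})$ as $DI_1(Y^N \rightarrow X^N || Z^{N-1}) + {TE}^\ast(X^{N-1} \rightarrow Y^N || Z^{N-1})$. Combining this with the symmetry established in the previous step produces exactly the third line of the statement, completing the proof.

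I do not anticipate a genuine obstacle here; the proof is short once the right viewpoint is adopted. The only pitfall to avoid is the temptation to derive symmetry from the $DI_1 + {TE}^\ast$ decomposition, where it is not manifest, instead of reading it off the visibly symmetric divergence definition \eqref{eq:nm2}. The one point worth stating cleanly is precisely why the denominator is symmetric, namely that swapping $X$ and $Y$ permutes its two factors $p(X_n|X^{n-1}Z^{n-1})$ and $p(Y_n|Y^{n-1}Z^{n-1})$ without altering their product.
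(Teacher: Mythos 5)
Your proof is correct, and it reaches the paper's conclusion by a slightly different (and somewhat cleaner) route. The paper proves the second equality computationally: it takes the symmetric summand $\Phi(x^n,y^n,z^{n-1})$ from \eqref{eq:phi4}, re-expands the joint conditional law with the reverse chain-rule factorization \eqref{eq:jointprob2}, and identifies the two resulting terms as $I(X_n;Y^n|X^{n-1}Z^{n-1}) + TE(X^{n-1}\!\rightarrow\! Y^n||Z^{n-1})$ in \eqref{eq:corr_sym}, which sums to the third line. You instead make the symmetry of CBI itself the explicit pivot: the defining expression \eqref{eq:nm2} is invariant under $X \leftrightarrow Y$ (joint numerator symmetric, denominator a product of causally conditioned marginals that merely swap roles), after which Proposition~\ref{prop:CBI} applied to the ordered pair $(Y^N,X^N)$ yields the third line. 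Both arguments rest on the same underlying fact, but yours uses Proposition~\ref{prop:CBI} as a black box for the relabeled pair --- legitimate, since that proposition is an identity valid for arbitrary jointly distributed processes, not tied to which node is labeled source. What your route buys is brevity and a transparent statement of \emph{why} CBI is symmetric; what the paper's explicit computation buys is the intermediate identity \eqref{eq:corr_sym}, exhibiting the dual decomposition of each CBI summand into a conditional MI plus a transfer entropy with the roles of $X$ and $Y$ exchanged, without appeal to relabeling.
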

\begin{proof} Without any loss of generality, the joint probability distribution in \eqref{eq:jointprob1} can alternatively be expanded as
	\begin{align} \label{eq:jointprob2}
	p&(y_nx_n|y^{n-1}x^{n-1}z^{n-1}) \notag \\
	&= p(x_n|y_n y^{n-1}x^{n-1}z^{n-1})\cdot p(y_n|y^{n-1}x^{n-1}z^{n-1}).
	\end{align} 
	Using the above equation and rewriting \eqref{eq:phi4} then gives us
	\begin{align}
	& \Phi(x^n,y^n,z^{n-1}) \notag \\
	&= \mathbb{E}\left[ \log \frac{p(Y_nX_n|Y^{n-1}X^{n-1}Z^{n-1})}{p(Y_n|Y^{n-1}Z^{n-1})p(X_n|X^{n-1}Z^{n-1})} \right] \\
	&= \mathbb{E}\left[ \log \frac{p(X_n|Y^nX^{n-1}Z^{n-1})}{p(X_n|X^{n-1}Z^{n-1})}\right] \notag \\
	& \qquad + \mathbb{E}\left[ \log \frac{p(Y_n|Y^{n-1}X^{n-1}Z^{n-1})}{p(Y_n|Y^{n-1}Z^{n-1})} \right] \\
	& = I(X_n;Y^n|X^{n-1}Z^{n-1}) + TE(X^{n-1} \!\rightarrow\! Y^n||Z^{n-1}). \label{eq:corr_sym}
	\end{align}
	Taking the sum on both sides of \eqref{eq:corr_sym} yields the required result.
\end{proof}

We also evaluate the expression for CBI by comparing it to conditional MI. Conditional MI measures the divergence between the actual observations and those which would be observed under the Markovian assumption ${X^N \!\leftrightarrow\! Z^{N-1} \!\leftrightarrow\! Y^N}$,
\begin{align}
& I(X^N ;Y^N|Z^{N-1}) \notag \\
&=  \mathbb{E}\left[\log\frac{p(X^{N}Y^{N}|Z^{N-1})}{p(X^{N}|Z^{N-1})\, p(Y^{N}|Z^{N-1})}\right] \label{eq:lemma1}\\
&=  \sum\limits_{n=1}^{N} \mathbb{E}\left[\log\frac{p(X_{n}Y_{n}|X^{n-1}Y^{n-1}Z^{N-1})}{p(X_{n}|X^{n-1}Z^{N-1})\, p(Y_{n}|Y^{n-1}Z^{N-1})}\right] \label{eq:lemma2} \\
 &= H(X^N|Z^{N-1}) + H(Y^N|Z^{N-1}) - H(X^NY^N|Z^{N-1}), \label{eq:lemma3} 
\end{align}
where \eqref{eq:lemma2} follows from the chain rule of probability.
Conditional MI is zero if and only if $X^N$ and $Y^N$ are conditionally independent given $Z^{N-1}$. \comone{By comparing conditional mutual information \eqref{eq:lemma2} with the expression for CBI in \eqref{eq:nm2}, we notice that CBI uses causal conditioning as $Z^{N-1}$ is replaced with $Z^{n-1}$.}

\vspace{-1ex}
\section{\comnew{Inferring Change In Functional Connectivity}}
\subsection{Preliminaries}

In our analysis, we separately extract the EEG response sections for each of the two audio quality levels and calculate the information measures individually for each of them. This allows us to compare how the different probability distributions used in each of the presented information measures effect the ability to detect a change in information flow among the ROIs in Fig.~\ref{fig:ROI}, between the cases when the subjects listen to high quality audio as opposed to degraded quality audio.

We begin by selecting a source and a destination ROI, $\mathcal{X}$ and $\mathcal{Y}$, respectively. The other six remaining ROIs are considered to represent the side information $\mathcal{Z}$. Since all electrodes in a ROI are located within close proximity of one another and capture data over the same cortical region, we consider every electrode in an ROI as an independent realization of the same random process. For example, the sampled EEG data recorded on every electrode within region $\mathcal{X}$, in a given time interval $N$, is considered a realization of the same random process $X^N$. This increases the sample size for the process $X^N$, reducing the expected deviation between the obtained empirical distribution of $X^N$ and the true one. \comnew{The discussed information measures are therefore calculated between the ROI pairs (and not between individual electrodes) for a total of 2-permutations of 8 ROIs, i.e., 56 combinations of source-destination pairs.}

In our earlier work \cite{mehtaMBMC} we demonstrated that the output EEG response to the audio quality, over an ROI, converges to a Gaussian distribution with zero mean. The intuition here is that the potential recorded at the EEG electrode at any given time-instant can be considered as the superposition of responses of a large number of neurons. Thus, the distribution of a sufficiently high number of these trials taken at different time instances converges to a Gaussian distribution as a result of the Central Limit Theorem. Fig.~\ref{fig:hist} shows the histogram of the sampled EEG data, formed by concatenating the output over all sensors in one ROI for a single subject. The sample skewness and kurtosis of the EEG output distribution are shown in Table~\ref{tab:skew_kurt}. For a Gaussian distribution the skewness equals 0 and the kurtosis equals 3 \cite{corder2014nonparametric, kim2013statistical}. To test Gaussanity of a large sample set, the sample skewness and kurtosis should approach these values, while an absolute skew value larger than 2 \cite {kim2013statistical} or kurtosis larger than 7 \cite{west1995structural} may be used as reference values for determining substantial non-normality. By inspecting the sample estimates in Table~\ref{tab:skew_kurt} and comparing the histogram to a Gaussian distribution in Fig.~\ref{fig:hist}, we observe that the EEG output distribution is indeed strongly Gaussian.

Knowing that the interacting random processes from the ROIs converge to a Gaussian distribution \cite{mehtaMBMC} allows us to formulate analytical closed-form expressions for calculating the information measures. The joint entropy of a $n$-dimensional multivariate Gaussian distribution with probability density $p(z_1 \ldots z_n)$ is known to be given by \cite{cover2012elements}
\begin{align} \label{eq:GaussEntropy}
H(Z_1 \ldots Z_n) = \frac{1}{2}\log{ (2\pi e)^n |\mathsf{C}(Z_1 \ldots Z_n)|  },
\end{align}
where $\mathsf{C(\cdot)}$ is the covariance matrix and $|\cdot |$ is the
determinant of the matrix. 

If $X^N$, $Y^N$, and $Z^N$ are jointly Gaussian distributed, then using \eqref{eq:GaussEntropy} in conjunction with \eqref{eq:nm3} reduces CBI to a function of their joint covariance matrices, 
\begin{align} \label{eq:Icov}
I(X^N  \!\leftrightharpoons\! Y^N ||& Z^{N-1}) = \frac{1}{2}\sum\limits_{n=1}^{N}\log \Bigg\{ \frac{|\mathsf{C}(X^{n-1}Y^{n-1}Z^{n-1})|}{|\mathsf{C}(X^nY^nZ^{n-1})|} \notag \\
&\cdot\frac{|\mathsf{C}(X^{n}Z^{n-1})|\cdot|\mathsf{C}(Y^{n}Z^{n-1})|}{|\mathsf{C}(X^{n-1}Z^{n-1})|\cdot|\mathsf{C}(Y^{n-1}Z^{n-1})|} \Bigg\}	. 
\end{align} 
In a similar manner, causally conditioned Massey's directed information, Kamitake's directed information, and sum transfer entropy, resp., can be reduced to obtain the following expressions, 
\begin{align} \label{eq:DI1cov}
&DI_1(X^N \!\rightarrow\! Y^N ||Z^{N-1}) \notag \\ 
&= \frac{1}{2}\sum\limits_{n=1}^{N}\log{ \frac{|\mathsf{C}(Y^nZ^{n-1})|\cdot|\mathsf{C}(X^{n}Y^{n-1}Z^{n-1})|} {|\mathsf{C}(Y^{n-1}Z^{n-1})|\cdot|\mathsf{C}(X^n Y^nZ^{n-1})|}	}, \quad\,\,
\end{align} 
\begin{align} \label{eq:DI2cov}
&DI_2(X^N \!\rightarrow\! Y^N ||Z^{N-1}) \notag \\ 
&= \frac{1}{2}\sum\limits_{n=1}^{N} \!\log{ \frac{|\mathsf{C}(X^{n-1}Y^{N}Z^{n-1})|\cdot|\mathsf{C}(X^{n}Y^{n}Z^{n-1})|} {|\mathsf{C}(X^{n-1}Y^{n}Z^{n-1})|\cdot|\mathsf{C}(X^n Y^N Z^{n-1})|}	} ,
\end{align}
\comnew{
\begin{align} \label{eq:TEcov}
&{TE}^\ast(X^{N-1} \!\rightarrow\! Y^N ||Z^{N-1}) \notag \\
&= \frac{1}{2}\sum\limits_{n=1}^{N} \!\log{ \frac{|\mathsf{C}(Y^nZ^{n-1})|\cdot|\mathsf{C}(Y^{n-1}X^{n-1}Z^{n-1})|} {|\mathsf{C}(Y^{n-1}Z^{n-1})|\cdot|\mathsf{C}(Y^n X^{n-1}Z^{n-1})|}.	}  
\end{align}} 

\begin{figure}[tbp]
	\centering
	\vspace{-2ex}
	\includegraphics[width = \columnwidth]{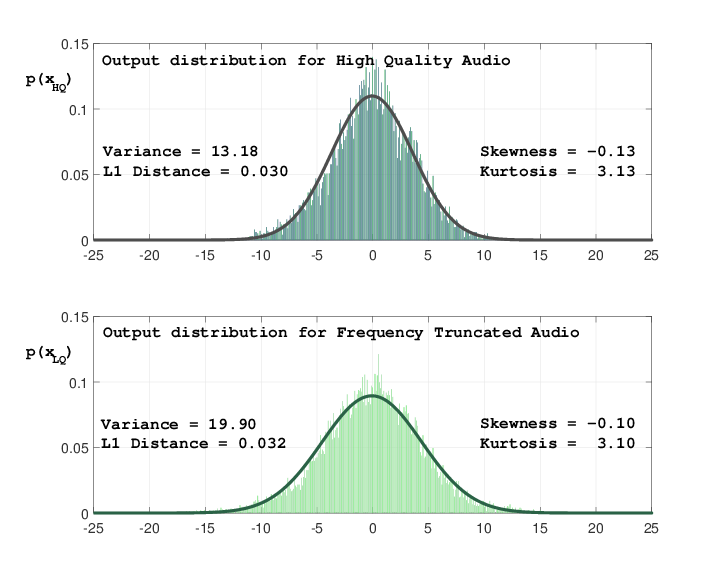}
	\caption{\small{\comone{Sampled EEG output distributions over a single ROI for high quality and frequency distorted audio input-stimulus, respectively. The data used to construct the histogram is the combined output over all sensors in ROI~5 during the period of a single frequency-truncated trial, for Subject S1. The Gaussian fit is obtained by using an estimator that minimizes the $L_1$ distance between the fitted Gaussian distribution and the histogram data.}}
	} \label{fig:hist}
	\vspace{-1ex}
\end{figure}

\newcolumntype{L}[1]{>{\raggedright\let\newline\\\arraybackslash\hspace{0pt}}m{#1}}
\newcolumntype{C}[1]{>{\centering\let\newline\\\arraybackslash\hspace{0pt}}m{#1}}
\newcolumntype{R}[1]{>{\raggedleft\let\newline\\\arraybackslash\hspace{0pt}}m{#1}}

\renewcommand{\arraystretch}{1.2}
\begin{table}[htbp]\centering
	\small
	\caption{ \small{\comone{Sample skewness and kurtosis estimated for a single subject S1 using data from a single frequency-truncated trial. The sampled EEG data for a given ROI was obtained by concatenating the output of all electrodes in that ROI.}}}
	\begin{tabular}{@{}c|C{9mm} C{9mm} C{9mm} |C{9mm} C{9mm} C{9mm} @{}}
		\toprule
		&  & HQ &  &  & LQ & \\
		ROI & Mean & Skewness & Kurtosis & Mean & Skewness & Kurtosis\\
		\midrule
		1 	& -0.003 	&	0.180	&	2.820 		& -0.014 	&	0.187	&	2.816\\
		2 	& 	0.148 	&	0.057	&	2.943		& 0.159 	&	0.214	&	2.786\\
		3 	& 	0.027 	&	0.069	&	2.931		& 0.014 	&	0.054	&	2.945\\
		4 	& 	0.216 	&	-0.154	&	3.155		& 0.2975	&	0.131	&	2.869\\
		5 	& -0.058 	&	-0.132	&	3.133		& -0.073 	&	-0.103 &	3.103\\
		6 	& -0.305 	&	-0.145	&	3.146		& -0.283 	&	-0.259	&	3.260\\
		7 	& -0.111 	&	 0.016	&	2.984		& -0.101 	&	 0.050	&	2.950\\
		8 	& 	0.465	&	-0.070	&	3.071		& 0.445	&	-0.342	&	3.342\\
		\bottomrule
	\end{tabular}
	\label{tab:skew_kurt}
	\vspace{-1ex}
\end{table}
\vspace*{-2ex}
\subsection{\com{Receiver operating characteristic (ROC) curves}}
In order to evaluate the accuracy with which an information measure and in particular CBI can distinguish between the perceived audio quality we conduct a receiver operating characteristic (ROC) curve analysis \cite{fawcett2006introduction, krzanowski2009roc} on the generated vectors of measurements for the high and degraded quality audio, respectively. The ROC curve serves as a non-parametric statistical test to compare different information rates \cite{bossomaier2016transfer, ito2011extending, garofalo2009evaluation, lizier2013inferring} and has the advantage that a test statistics can be generated from the observed measurements.

Consider a general binary classification scheme between two classes
$\mathcal{P}$ and $\mathcal{N}$, labeled as \textit{positive} and
\textit{negative}, respectively. These classes contain samples from the
respective class distributions with $|\mathcal{P}|=N_p$ and
$|\mathcal{N}|=N_n$. Also, every individual instance from the positive class is associated with a
known measurement score $S^{(p)}_i$ which is a random variable with an
unknown distribution and corresponding values
$s^{(p)}_i,\, i=1,\ldots ,N_p$. Similarly, for the negative class we have
the random variable $S^{(n)}_i$ with values $s^{(n)}_i,\, i=1,\ldots ,N_n$.

For a given discrimination threshold value $T$, the classification rule is such that the individual is allocated to the positive class if its score exceeds the threshold, i.e., $s^{(p)}_i > T$. Else, $s^{(p)}_i \leq T$, then the instance is allocated to the negative class. The true positive probability \cite{krzanowski2009roc} is the probability that an individual from the positive class is correctly classified as belonging to the positive class:
\begin{align}
P_{tp}(T) = \frac{\sum_{i=1}^{N_p} \mathbbm{1}( s^{(p)}_i > T)}{N_p}, \label{eq:ptp}
\end{align}
where $\mathbbm{1}(\cdot)$ denotes the indicator function. Likewise, the false positive probability is the probability that an individual from the negative population is misclassified (incorrectly allocated) as belonging to the positive population:
\begin{align}
P_{fp}(T) = \frac{\sum_{i=1}^{N_n} \mathbbm{1}( s^{(n)}_i > T)}{N_n}. \label{eq:pfp}
\end{align}

Since in general the best choice for $T$ is not known, the measurement scores themselves often serve as optimal choices for the discrimination threshold values \cite{fawcett2006introduction}.
The ROC curve is then a graphical representation obtained by plotting the true positive probability $P_{tp}(T)$ versus the false positive probability $P_{fp}(T)$ as a function of the discrimination threshold value $T$.

The area under the ROC curve, denoted using the variable $\theta$, summarizes the results over all possible values of the threshold $T$ \cite{fawcett2006introduction, bradley1997use}, with possible values ranging from $\theta=0.5$ (no discriminative ability) to $\theta=1.0$ (perfect discriminative ability). \comnew{In fact, the area under the curve is shown to be equivalent to the Wilcoxon-Mann-Whitney non-parametric test-statistic \cite{corder2014nonparametric} and can be used to determine whether samples selected from the two class populations have the same distribution \cite{hanley1982meaning, delong1988comparing}.} \comnew{More specifically, the area under the ROC curve is equivalent to the probability that a randomly selected instance from the positive class will have a measurement score that is greater than a randomly selected instance from the negative class.}


\vspace{-2ex}
\subsection{Performance of information measures on simulated EEG data}
To obtain an initial assessment of the information measures introduced in Sec.~IV and Sec.~V, we apply these measures to synthetic EEG data. The advantage of this approach is that the true causal connectivity graph of the simulated network is known and that, in contrast to real EEG data obtained from human subjects, in principle there is no limitation to the number of trials. The EEG data is created using a wavelet transform following the approach described in \cite{bridwell2016spatiospectral}, wherein the EEG is decomposed as a convolution of a series of basis functions (i.e., wavelets) within selected frequency bands listed in Table~\ref{tab:EEGsimulated}.
In \cite{bridwell2016spatiospectral} the probability distribution of the wavelet coefficients for each frequency band was estimated using real human EEG data sampled at 250Hz, which was shown to be a logistic distribution with heavier tails than a Gaussian distribution. We now generate simulated wavelet coefficients for a particular frequency band by multiplying the associated logistic distribution with a constant scaling factor and then randomly drawing samples from this scaled distribution. The simulated EEG signal is then obtained by an inverse wavelet transform of these randomly drawn samples for all frequency bands in Table~\ref{tab:EEGsimulated}. In this work, two different sets of scaling factors are used to simulate change in network connectivity. Since the magnitude of the scaling coefficient controls the spectral energy (activity) in that band, we here use the activity labels ``high" and ``low", respectively, to differentiate between these two sets of connectivities as shown in Table~\ref{tab:EEGsimulated}.
For simulation purposes, the wavelet-coefficients for high activity in the delta band (0-3.91 Hz) and theta band (3.91-7.81 Hz) are scaled by a factor chosen from a distribution $\mathcal{U}(0.9, 1)$, where $\mathcal{U}(a,b)$ denotes the uniform distribution within the interval $(a,b)$. Low activity is simulated by choosing the scaling factors for the delta and theta bands from $\mathcal{U}(0.55, 0.85)$. Note that the uniform distribution and the associated intervals are only chosen to demonstrate a change in delta and theta activity levels, and in theory any suitable set of high-valued and low-valued scaling factors can potentially be used to this effect.  

A source and sink electrode location was assigned to each frequency band \cite{bridwell2016spatiospectral} as shown in Table~\ref{tab:EEGsimulated}. The locations of these electrodes are marked in red in Fig.~\ref{fig:ROI}. Simulated EEG signals are generated across the scalp by spherical spline interpolation across neighboring electrodes by using the \textit{eeg\_interp} function in EEGLAB \cite{delorme2004eeglab}. In essence this function simulates EEG scalp topography by smearing electric potentials through volume conduction \cite{nunez2006electric}. This results in a synthetic EEG waveform at each electrode with energy within each of the four characteristic frequency bands.

\newcolumntype{L}[1]{>{\raggedright\let\newline\\\arraybackslash\hspace{0pt}}m{#1}}
\newcolumntype{C}[1]{>{\centering\let\newline\\\arraybackslash\hspace{0pt}}m{#1}}
\newcolumntype{R}[1]{>{\raggedleft\let\newline\\\arraybackslash\hspace{0pt}}m{#1}}

\renewcommand{\arraystretch}{1.2}
\begin{table}[btp]\centering
	\small
	\caption{ \small{Wavelet coefficient scale factor for high and low activity for each source-sink pair.}}
	\vspace{1ex}
	\begin{tabular}{@{}c c | C{18mm} |C{13mm} C{20mm} @{}}
		\toprule
		Source & Sink & Freq. (Hz) & High & Low \\
		\midrule
		Fz 	& Cz 	&	0 - 3.91 (delta)	&	$\mathcal{U}$(0.9, 1) 		& $\mathcal{U}$(0.55, 0.85) 	\\
		\midrule
		F5, F6 	& 	P5, P6 	&	3.91 - 7.81 (theta)	&$\mathcal{U}$(0.9, 1)		& $\mathcal{U}$(0.55, 0.85) \\
		\midrule
		P6, F5 	& 	Cz 	&	7.81 - 15.62 (alpha) &	1		& 1	\\
		\midrule
		T7 	& 	T8 	&	15.62 - 31.25	(beta)&	1		& 1 \\
		\bottomrule
	\end{tabular}
	\label{tab:EEGsimulated}
	\vspace{-4ex}
\end{table}

We generate 1000 trials of simulated EEG data for each of the two cases of high and low activity. For each trial, a total of 430500 samples (at 250 Hz) are generated for each of the electrodes listed in Table~\ref{tab:EEGsimulated}. The EEG output generated at each of these electrodes is tested for Gaussianity similar to the process described in Sec.VI-A, including using an estimator that minimizes the $L_1$ distance between the fitted Gaussian distribution and the generated data, and indeed verified to be near Gaussian. For each trial of high and low activity the source electrode (of a given frequency band for which the activity changed) is designated as $\mathcal{X}$, the corresponding destination electrode as $\mathcal{Y}$, and all other remaining electrodes in Table~\ref{tab:EEGsimulated} are grouped together as $\mathcal{Z}$. The sample space for each trial is obtained by dividing the 430500 samples into 14350 sections each of block length $N=30$ (120 ms). \comnew{We then compute the different information measures \eqref{eq:Icov}-\eqref{eq:TEcov} over the block length $N$ and compare how the different probability distributions used in each measure effect the ability to detect change in connectivity.} The corresponding covariance matrices, $\forall \,\,n = 1,\ldots,N\,$, are estimated from the 14350 sections, assuming stationarity of the EEG signals within each section. \comnew{While here we always sum up to a depth of $N=30$, we note that the estimate accuracy of the information measures, the computational time and complexity, and the probability of a false negative can all be further improved by rigorously selecting optimal
history lengths\cite{vicente2011transfer, wibral2014directed}.}

The ROC curve analysis is used to test the accuracy with which these information measures can distinguish between the cases of high and low activity. The positive class $\mathcal{P}$ is constructed by concatenating the information rates for the high activity, and likewise the negative class $\mathcal{N}$ for low activity. The ROC curve for the information measure is then constructed using the empirical probabilities $P_{tp}(T)$ and $P_{fp}(T)$ calculated according to \eqref{eq:ptp} and \eqref{eq:pfp} by varying the threshold $T$ over all possible values of observed information rates. The resulting ROC curves are shown in Fig.~\ref{fig:FzCz}. \comnew{We observe that CBI consistently shows a significant variation in connectivity performing either better ($F5\!\leftrightarrow\!P6$) or at least identical ($F_z\!\leftrightarrow\!C_z$) to the other casually conditioned measures, with the larger area under the ROC curve for CBI indicating superior classification accuracy.
Also, since there is no direct link in the direction from $Cz\!\rightarrow\!Fz$ (see Table~\ref{tab:EEGsimulated}), the causally conditioned directional information measures do not show significant discriminability in connectivity. 
However, an important observation here is that for the electrode pair F5 and P6, the directional measures identify an almost symmetric change in connectivity in both directions. This is because for complex interconnected systems (as modeled by the EEG electrodes in Table~\ref{tab:EEGsimulated}) there is not necessarily a greater directional information transfer from source to destination than from destination to source \cite{schreiber2000measuring, pulkitgrover2015}. In \cite{schreiber2000measuring, wibral2014directed, pulkitgrover2015}, the authors specifically highlight the inherent limitations of applying directed information to infer causal connectivity in an unknown network without \textit{a priori} knowledge of which node is the source and which node is the destination.}


\begin{figure}[htb]
	\centering
	\includegraphics[width=\columnwidth]{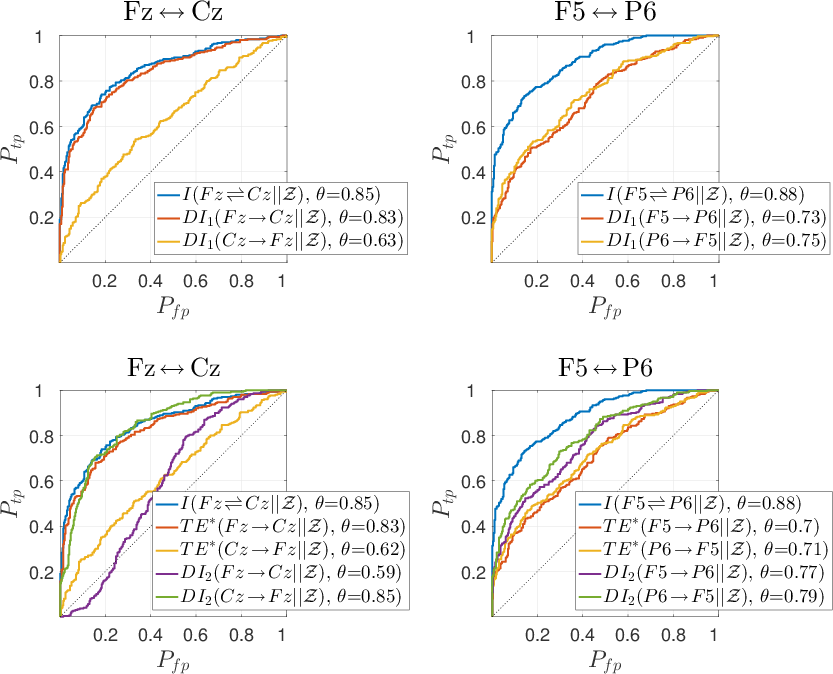}
	\vspace{0.5ex}
	\caption{\small{ROC curves classifying the change in connectivity between high and low activity for the synthetic EEG model.}}
	\label{fig:FzCz}
	\vspace*{-4ex}
\end{figure}


\vspace{-2ex}
\subsection{\com{Performance of information measures on real EEG brain data}}

\comtwo{In the following, we show that CBI offers a significant improvement in performance over the other information measures when applied to infer connectivity for real EEG data from human subjects. \comnew{Towards this end we select a source ROI $\mathcal{X}$, a destination ROI $\mathcal{Y}$, and group all the other six remaining ROIs as $\mathcal{Z}$. We then use the closed form expressions \eqref{eq:Icov}-\eqref{eq:TEcov} to calculate the instantaneous information rates  for each of the 56 possible combinations of source-destination ROI pairs. The corresponding covariance matrices $\mathsf{C(\cdot)}$ of the joint Gaussian distributions for each source-destination combination are estimated} using 125 millisecond long overlapping sliding windows (having a total length of $N=32$ points, with a 8 point overlap on each side) of the trial data. We assume stationarity of the EEG signals within these windows, implying that the functional connectivity does not vary in this segment and also make an implicit Markovian assumption that the current time window captures all past activity between the electrodes.
For each sliding window position the covariance matrix is computed as an average over the whole sample space for that specific subject, $\forall \,\,n = 1,\ldots,N$.
The sample space for calculating each covariance matrix is created by pooling across segments with the same audio quality from a total of 56 trials across multiple presentations of different distortion and music types.
Further, since we consider each electrode in an ROI to be an independent realization of the same random process, we also pool across all electrodes in the ROI. \comnew{Thus, for 9 electrodes in each ROI and 56 trials with the music quality repeating twice in each trial, we obtain a total of $56 \times 2 \times 9 = 1008$ sections (realizations) for each window position per ROI. In order to try and ensure that the number of electrodes in an ROI does not change bias w.r.t. to the increasing sample size, we always use the same number of pooled sections (realizations) to create the effective sample space.} The covariance is then computed using sample realizations at the corresponding time across all 1008 sections, assuming stationarity at that time point across sections. 
Thus, over all sliding window positions we obtain a vector 
\begin{align}
\mathbf{I}&(X^N \!\leftrightharpoons\! Y^N || Z^{N-1}) \notag \\ 
&= [I_1(X^N \!\leftrightharpoons\! Y^N || Z^{N-1}), \dots , I_K(X^N \!\leftrightharpoons\! Y^N || Z^{N-1})]
\end{align}
where $K$ is the total number of window positions and $I_k(X^N \!\leftrightharpoons\! Y^N || Z^{N-1})$ is the instantaneous information rate for the $k$-th window computed using \eqref{eq:Icov}. The vectors ${\mathbf{DI}_1(X^N \!\rightarrow\! Y^N ||Z^{N-1})}$, ${\mathbf{DI}_2(X^N \!\rightarrow\! Y^N ||Z^{N-1})}$ and ${\mathbf{TE}^\ast(X^{N-1} \!\rightarrow\! Y^N ||Z^{N-1})}$ are constructed similarly. These vectors will be considered further and are hereby denoted as $\mathbf{I}$, $\mathbf{DI}_1$, $\mathbf{DI}_2$, and $\mathbf{TE}^\ast$ for brevity.
}

\com{Also, we only select a subset of 8 subjects who provide the largest median mutual information values on the event related potential (ERP) channel connecting the audio stimulus and the quantized EEG sensor outputs (see \cite{mehtaMBMC} for details). It follows from the discussion in \cite{mehtaMBMC} that these were the subjects whose EEG recordings showed the maximum response to the changing audio quality levels. We do not combine data (sample space) between subjects, and the information rates are calculated separately for each of these 8 subjects for all possible combinations of ROI pairs.}


\subsubsection{Instantaneous information results}



Fig.~\ref{fig:Massey} illustrates the vectors for the causal measures $\mathbf{I}$, $\mathbf{DI}_1$, and $\mathbf{TE}^\ast$ plotted over time for six different combinations of ROI pairs for a second of EEG trial data, averaged over all eight subjects. The 0\,ms marker on the horizontal axis is the stimulus onset time, i.e., the instant when the audio quality changed.
The results indicate a notable difference between the amount of causal information flow for high and degraded audio, in particular for CBI. More specifically, there appears to be a higher amount of information flow between the ROIs when the subject listens to degraded quality audio.

\begin{figure*}
	\centering
	\includegraphics[width=\textwidth]{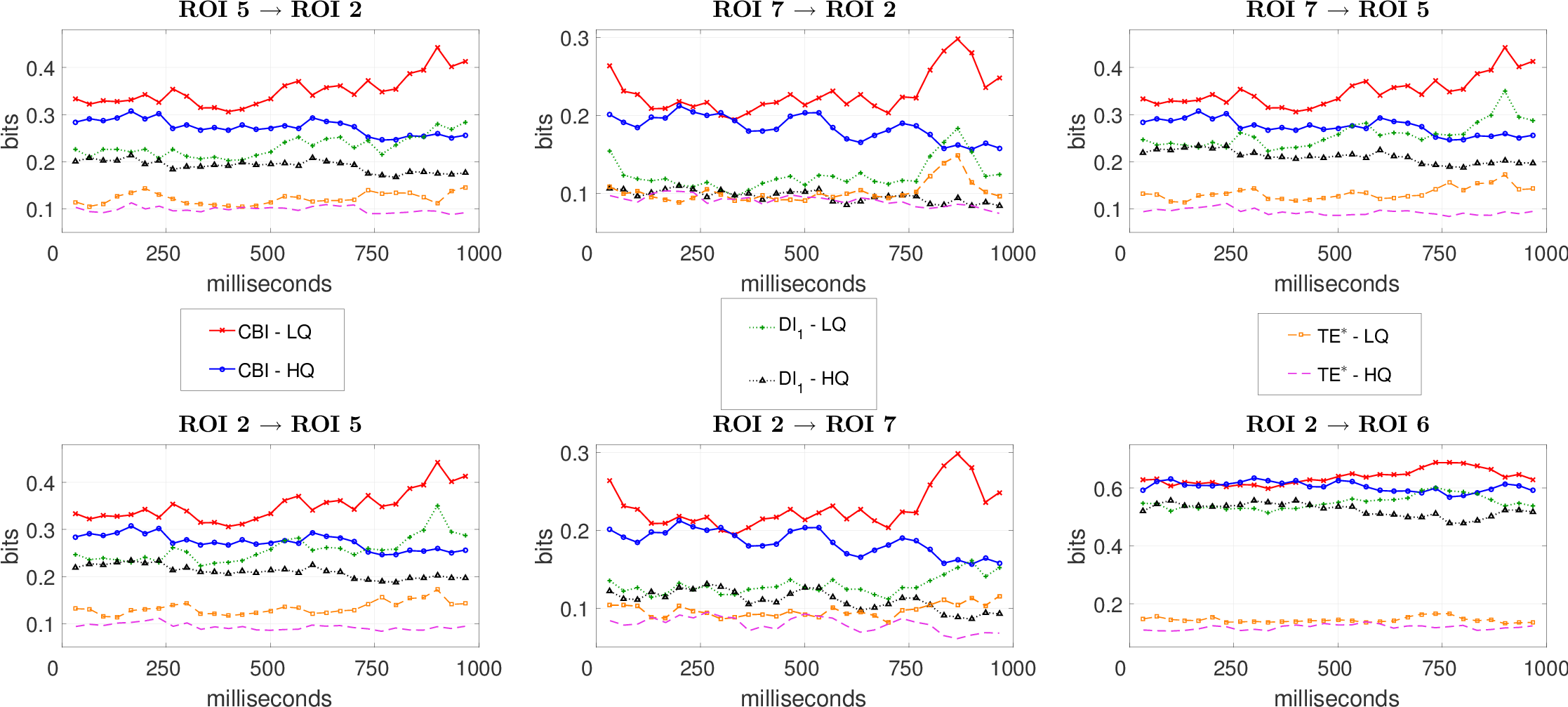}
	\vspace{1ex}
	\caption{\small{Instantaneous information transfer rate vectors $\mathbf{I}$, $\mathbf{DI}_1$ and $\mathbf{TE}^\ast$ between six different ROI pairs, averaged over eight human subjects for a second of the trial data after stimulus onset. We observe a significant difference between the transfer rates of the two audio qualities, with the transfer rates being notably higher across all ROIs when the subjects listen to degraded quality audio.}
	} \label{fig:Massey}
	\vspace{-4ex}
\end{figure*}

\comtwo{The ROIs we select for plotting in Fig.~\ref{fig:Massey} are based on the direction and order of the auditory sensing pathway in the brain. The primary auditory cortex located in the left and right temporal lobes is the first region of the cerebral cortex to receive auditory input. The higher executive functions and subjective responses are a result of the information exchange between the primary auditory cortex and the other cortical regions, predominantly including the prefrontal cortex \cite{romanski1999dual, rauschecker2009maps}. We therefore plot the information rates between the temporal lobes (ROI~5, ROI~7), and the temporal lobes and the prefrontal cortex (ROI~2) in Fig.~\ref{fig:Massey}. \comnew{We observe here that $\mathbf{DI}_1$ and $\mathbf{TE}^\ast$ detect an almost identical (symmetric) information flow in both directions between ROI pairs $2, 5$ and $2, 7$, respectively.}
Other important auditory pathways are the dual dorsal and ventral streams which carry information medially between the prefrontal cortex, the temporal lobes and the parietal lobe \cite{ahveninen2006task, wang2008dual}. We therefore also include the rates from the prefrontal cortex to the parietal lobe (ROI~2$\rightarrow$ROI~6) in Fig.~\ref{fig:Massey}.}

Note that the first second after stimulus onset can be considered as the \textit{transient response} of the brain's initial perception towards stimulus. Since we are interested in detecting change in connectivity in response to the change in audio quality, for the rest of our analysis here we focus on the information transfer over this initial one second transient period.

\subsubsection{\com{ROC results}}
We construct separate classes for each of the different information measures for every ROI pair. Therefore for a given ROI pair, $\mathcal{P}$ corresponds to one of the $\mathbf{I}$, $\mathbf{DI}_1$, $\mathbf{DI}_2$, or $\mathbf{TE}^\ast$ rate vectors for degraded quality concatenated for all eight subjects over the first second after stimulus onset, and likewise for $\mathcal{N}$ with respect to high quality. Then, the observed information rates in each of the vectors are equivalent to the measurement scores.

In our setup, given a discrimination threshold value $T$ for the information rate, $P_{tp}(T)$ is calculated according to \eqref{eq:ptp} and corresponds to the empirical probability that the observed information rate for degraded quality audio is correctly classified as degraded quality, while $P_{fp}(T)$ is calculated according to \eqref{eq:pfp} and is the empirical probability that an observed information rate for high quality audio is misclassified as degraded quality. 

\begin{figure*}
	\centering
	\includegraphics[scale = 0.4]{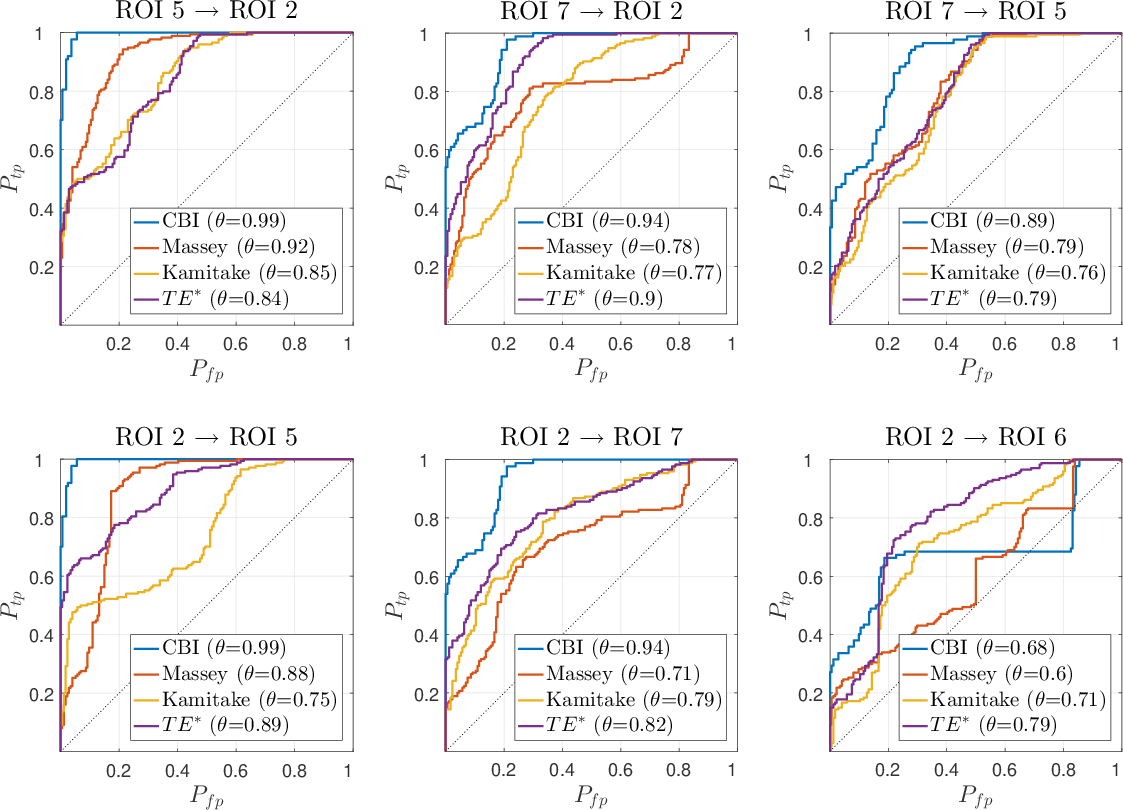}
	\vspace{1ex}
	\caption{\small{\com{ROC curves for different information measure and the same six ROI pairs shown in Fig.~\ref{fig:Massey}, along with the corresponding area $\theta$ under the ROC curve. The ROC curve for each information measure is constructed separately by using the corresponding information rates over eight human subjects.}	}} \label{fig:ROC}
	\vspace{-4ex}
\end{figure*}

Fig.~\ref{fig:ROC} shows the ROC curves for each of the information measures, for the same six ROI pairs shown in Fig.~\ref{fig:Massey}. For each of the constructed ROC curves in Fig.~\ref{fig:ROC} we also provide the total area under that curve $\theta$ \cite{delong1988comparing, bradley1997use}. We observe from the results in Fig.~\ref{fig:ROC} that the classifier performance of the information measures varies considerably depending on the ROI pair chosen. This is due to the fact that substantial changes are likely to occur between the observed information rates for high quality and degraded quality only for those ROI connections which are actively involved in detecting and processing an auditory stimulus response.
For five out of the six ROI pairs depicted in Fig.~\ref{fig:ROC}, we notice that CBI performs exceedingly well and has the best discriminability among all given information measures. 

\subsubsection{\com{Statistical significance testing of area under the ROC curve}}
The precision of an estimate of the area under a ROC curve is validated by conducting a test for statistical significance \cite{cortes2005confidence, liu2005testing}. The statistical significance test is used to determine whether or not the area under the ROC curve is lesser than a specific value of interest $c$. The null hypothesis that we are interested in testing can therefore be defined as 
\begin{align}
H_0 : \theta \leq c, \,\,&\text{i.e., the area under the ROC curve} \notag \\
& \text{is less than a specific value of interest $c$.} \label{eq:nullhypo}
\end{align}

Since we do not know the underlying distribution of the observed information rates we employ a non-parametric approach for significance testing by using the method of bootstrapping \cite{efron1981nonparametric, efron1994introduction,cortes2005confidence, liu2005testing}. Further, since the observed information rates in the vectors $\mathbf{I}$, $\mathbf{DI}_1$, $\mathbf{DI}_2$, and $\mathbf{TE}^\ast$ are correlated in time with high probability, we use a modified block bootstrapping approach \cite{hall1995blocking} to preserve this temporal correlation.
Given a class $\mathcal{P}$ with a total of $N_p$ samples, the block bootstrapping procedure divides this class into overlapping blocks of $l$ samples each, to create a total of $N_p-l+1$ blocks. Bootstrapping is then performed by drawing $m=N_p/l$ blocks with replacement and concatenating them to form a resampled class. We pick $l = N_p^{1/3} \simeq 8$ \cite{hall1995blocking}.   
The significant test is then carried out by performing the following steps:
\begin{enumerate}[ leftmargin = !, align=left, label=\roman*) ]
	\item We perform block bootstrapping on the positive class $\mathcal{P}$ to create a resampled positive class $\hat{\mathcal{P}}_b$, with $b=1,2,\dots,B$. Likewise $\mathcal{N}$ is block bootstrapped to create resampled negative class $\hat{\mathcal{N}}_b$. Here we use $B=2000$ resamples. Bootstrapping is performed independently for the positive and negative classes by strictly resampling from only within that particular class.
	\item  We calculate the area under the ROC curve for the pair of bootstrap resampled classes $\hat{\mathcal{P}}_b$ and $\hat{\mathcal{N}}_b$, and denote it as $\hat{\theta}_b, \forall \, b=1,2,\dots,B$.
	\item The standard deviation of the area under the ROC curve can be calculated as
		\begin{align}
			\sigma_{\hat{\theta}} = \sqrt{\frac{\sum_{b=1}^B (\hat{\theta}_b - \bar{\hat{\theta}})^2}{B-1}},
		\end{align}
	where $\bar{\hat{\theta}}=\sum{\hat{\theta}_b}/B$ is the estimated mean of the area under $B$ resampled ROC curves. 
	\item  Let $z_{\hat{\theta}} = (\hat{\theta} - \bar{\hat{\theta}})/\sigma_{\hat{\theta}}$ denote the normalized random variable corresponding to the bootstrap calculated test statistic $\hat{\theta}$. Then $\lim\limits_{B\to \infty} p_{z_{\hat{\theta}}} \sim \mathbbm{N}(0,1)$, where $p_{z_{\hat{\theta}}}$ is the the empirical probability distribution of $z_{\hat{\theta}} $, and $\mathbbm{N}(0,1)$ is the standard Gaussian distribution with zero mean and unit variance.
	\item We then calculate the probability of observing a value for the area under the ROC curve under the null hypothesis, or in other words, the probability of $\Pr(\hat{\theta}\leq c)$. This one-sided tail probability is known as the p-value of the test and can be easily calculated using the cumulative distribution function (cdf) of the standard normal distribution, ${\Phi(x)\,=\,\frac{1}{\sqrt{2\pi}}\int_{-\infty}^x\exp{\left(\frac{-u^2}{2}\right)}\,du}$, as ${p_c~\triangleq~\Pr(\hat{\theta} \leq c) = \Phi\left( \frac{ c - \bar{\hat{\theta}} }{ \sigma_{\hat{\theta}} } \right)}$. Note that a smaller p-value is evidence against $H_0$, see \eqref{eq:nullhypo}.
	\item  If the p-value is smaller than a significance threshold $\alpha$, we may safely reject the null hypothesis. Therefore the null hypothesis $H_0:\theta \leq c$ will be rejected if $p_c \leq \alpha.$
	For our purpose, we set the significance threshold as $\alpha=0.05$.  
\end{enumerate}

\subsubsection{\comnew{Inferring significantly changing connections}}
\comnew{A connection between an ROI pair is deemed to \textit{change significantly} if it shows a pronounced difference in the information transfer rates between high and degraded quality, thereby indicating that these ROIs show the greatest change in brain (electrical) activity in response to the change of audio quality.}\footnote{\comnew{We emphasize here that the statistical significant results are not used to identify connections with the largest information flow but rather those with the \textit{largest change} in information flow, where the former is neither checked for, nor required to distinguish between the audio qualities. Note that due to the high dimensionality of the measurement space, assessing statistical significance becomes much more difficult if $X^N$, $Y^N$, and $Z^N$ are not jointly Gaussian distributed.}}
Since a larger area under the ROC curve indicates a greater statistical difference between the information rates, significant connections are identified by testing the null hypothesis $H_0:\theta\leq c$ for a sufficiently large value of $c$. However, the optimal choice of the cutoff threshold for inferring the functional connectivity network in the brain is a rather difficult task \cite{van2010comparing, langer2013problem}, and often depends on several criteria including the number of nodes and connections (edges) in the network, the true topology of the underlying functional network, and the nature of the neurophysiological task involved.

For the sake of exposition we choose here $c=0.85$ and test against the null hypothesis $H_0: \theta \leq 0.85$. We perform the significance test via bootstrapping as outlined above and calculate the corresponding p-value $p_{0.85}$ for each information measure and all ROI pairs.
If $p_{0.85} \leq 0.05$, then $H_0$ is rejected, and the ROI pair is \comnew{declared to be a significantly changing connection.}
Fig.~\ref{fig:p75} shows a matrix representation of the net result for all possible combination of ROI pairs. The rows of the matrix represent the source ROI and the columns represent the destination ROI. We observe from these results that CBI is most successful at rejecting the null hypothesis $\theta \leq 0.85$ across all ROI pairs showing the highest discriminability between the high and degraded quality. We also observe that while Massey's directed information and sum transfer entropy perform much better than Kamitake's directed information, they still perform poorly when compared to CBI. As CBI is a bidirectional symmetric measure the inferred change in connectivity between the ROI pairs is also symmetric.
\comnew{Note that there also appears to be a high degree of bidirectional symmetry in the inferred connectivity change using Massey's and Kamitake's directed information and sum transfer entropy.}
Finally, we observe from Fig.~\ref{fig:p75} that every connection identified as significant by Massey's directed information and/or sum transfer entropy is likewise identified as significant by CBI, which is to be expected since CBI is in effect an extension of the two as shown in Proposition~\ref{prop:CBI}.

We observe from the inferred \comnew{connectivity change} matrix for CBI that the most active regions are the temporal lobes (ROIs~5 and 7) along with the frontal lobe (ROIs~1, 2 and 3). This is consistent with the literature that the ventral auditory pathway, which in part includes the auditory cortex and the prefrontal cortex, has a role in auditory-object processing and perception \cite{rauschecker2009maps, wang2008dual, bizley2013and}. There is also significant information flow from both the left and right temporal lobes to the parietal lobe (ROI~6). Interestingly, there appears to be no dorsal information flow from the parietal lobe (ROI 6) to the frontal lobes. Also, no significant connections are detected in the occipital lobe (ROI 8), which houses the primary visual cortex and would not be expected to show significant activity for an auditory-discrimination task.  

\begin{figure}[tbp]
	\vspace{-1ex}
	\centering
	\includegraphics*[width=\columnwidth, trim= 79.5 2 48 .5, clip=true]{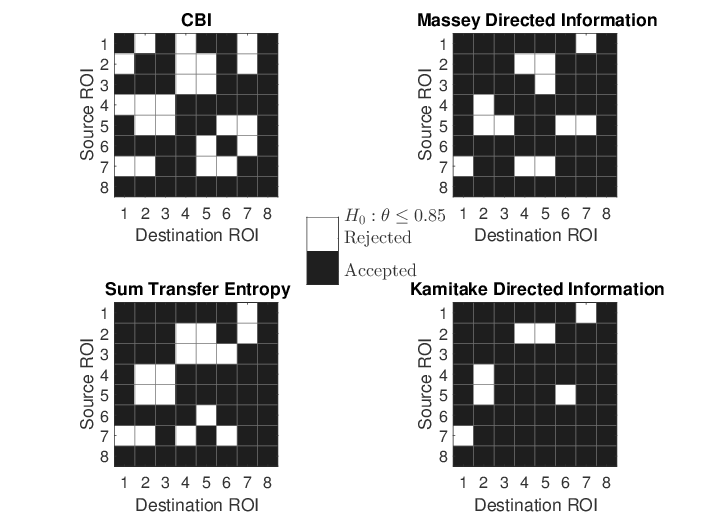}
	\vspace{0.3ex}
	\caption{\small{Inferred \comnew{connectivity change} matrix in response to the change in the perceived audio quality. The null hypothesis tested is $H_0: \theta \leq 0.85$, for which the area under the ROC curve is not a significant connection. The white squares indicate significant connections which show the largest change in their information transfer rates in response to changing audio quality.}
	} \label{fig:p75}
	\vspace{-3.5ex}
\end{figure}

\vspace{-2ex}
\section{Conclusion}
\vspace{-1ex}
We presented a novel information theoretic framework to assess changes in perceived audio quality by directly measuring the EEG response of human subjects listening to time-varying distorted audio. Causal and directional information measures were used to infer the \comnew{change in} connectivity between EEG sensors grouped into ROIs over the cortex. In particular, Massey's directed information, Kamitake's directed information, and sum transfer entropy were each used to measure information flow between ROI pairs while successfully accounting for the influence from all other interacting ROIs using causal conditioning. We also proposed a new information measure which is shown to be a causal bidirectional modification of directed information applied to a generalized cortical network setting, and whose derivation is strongly related to the classical MAC with feedback. Further, we showed that CBI performs significantly better in being able to distinguish between the audio qualities when compared to the other directed information measures. The connectivity results demonstrate that a \comnew{change in information flow} between different brain regions typically occurs as the subjects listen to different audio qualities, with an overall increase in the information transfer when the subjects listen to degraded quality as opposed to high quality audio. We also observe \comnew{significant connections with respect to a change in audio quality} between the temporal and frontal lobes, which is consistent with the regions that would be expected to be actively involved during auditory signal processing in the brain.  

\vspace{-2ex}
\bibliographystyle{IEEEtran} \bibliography{refs}
\vspace {-9ex}
\begin{IEEEbiography}[{\includegraphics[width=1in,height=1.25in ,clip,keepaspectratio]{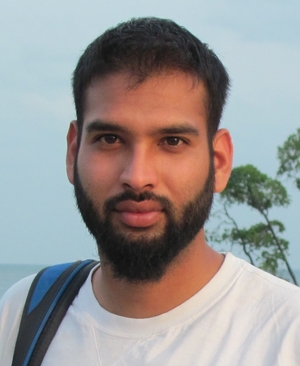}}]{Ketan Mehta}
	received his M.S. (2010) in electrical engineering and Ph.D. (2017), both from New Mexico State University, Las Cruces, USA. From 2012 to 2017 he was a research assistant at the New Mexico State University. His research interests span information theory, signal processing and statistical algorithms with interdisciplinary applications in neural signal processing and cognitive neuroscience.    
\end{IEEEbiography}
\vspace {-8ex}
\begin{IEEEbiography}[{\includegraphics[width=1in,height=1.25in, clip,keepaspectratio]{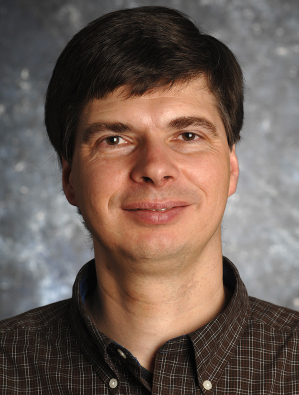}}]{J{\"o}rg Kliewer}
	(S'97--M'99--SM'04) received the
	Dipl.-Ing.~(M.Sc.) degree in electrical engineering from Hamburg
	University of Technology, Hamburg, Germany, in 1993 and the
	Dr.-Ing.~degree (Ph.D.) in electrical engineering from the University
	of Kiel, Germany, in 1999, respectively.
	From 1993 to 1998, he was a research assistant at the University of
	Kiel, and from 1999 to 2004, he was a senior researcher and lecturer
	with the same institution. In 2004, he visited the University of
	Southampton, U.K., for one year, and from 2005 until
	2007, he was with the University of Notre Dame, IN, as a
	Visiting assistant professor. From 2007 until 2013 he was with New Mexico
	State University, Las Cruces, NM, most recently as an associate professor.
	He is now with the New Jersey Institute of Technology, Newark, NJ,
	as an associate professor. His research interests span information and coding theory, graphical models, and statistical algorithms, which includes applications to networked communication and security, data storage, and biology.
	Dr.~Kliewer was the recipient of a Leverhulme Trust Award and a German
	Research Foundation Fellowship Award in 2003 and 2004, respectively.
	He was an Associate Editor of the IEEE
	Transactions on Communications  from 2008 until 2014, and since 2015 serves as an Area Editor for the same journal. He is also an Associate Editor of the IEEE Transactions on Information Theory since 2017 and a member of the editorial board of the IEEE Information Theory Newsletter since 2012.
\end{IEEEbiography}

\end{document}